\newcommand{\old}[1]{{}}
\begin{document}

\title{Geometric Spanners With Small Chromatic 
   Number\thanks{Research partially supported by HPCVL, NSERC, MRI, 
                 CFI, and MITACS.}}
\author{Prosenjit Bose\inst{1} \and Paz Carmi\inst{1} \and Mathieu Couture\inst{1} \and 
Anil Maheshwari\inst{1} \and Michiel Smid\inst{1} \and Norbert Zeh\inst{2}}
\institute{School of Computer Science, Carleton University \and
Faculty of Computer Science, Dalhousie University}

\maketitle

\begin{abstract} 
Given an integer $k \geq 2$, we consider the problem of computing the smallest real number 
$t(k)$ such that for each set $P$ of points in the plane, there exists a $t(k)$-spanner for 
$P$ that has chromatic number at most $k$. 
We prove that $t(2) = 3$, $t(3) = 2$, $t(4) = \sqrt{2}$, and give upper and lower bounds on 
$t(k)$ for $k>4$. 
We also show that for any $\epsilon >0$, there exists a $(1+\epsilon)t(k)$-spanner for $P$ 
that has $O(|P|)$ edges and chromatic number at most $k$.
Finally, we consider an on-line variant of the problem where the points of $P$ are given one after 
another, and the color of a point must be assigned at the moment the point is given.
In this setting, we prove that $t(2) = 3$, $t(3) = 1+ \sqrt{3}$, $t(4) = 1+ \sqrt{2}$, and give 
upper and lower bounds on $t(k)$ for $k>4$. 
\end{abstract}



\section{Introduction} 
Let $P$ be a set of $n$ points in the plane. A \emph{geometric graph} 
with vertex set $P$ is an undirected graph whose edges are line segments 
that are weighted by their Euclidean length. For a real number $t \geq 1$, 
such a graph $G$ is called a \emph{$t$-spanner} if the weight of the 
shortest path in $G$ between any two vertices $p$ and $q$ does not exceed 
$t|pq|$, where $|pq|$ is the Euclidean distance between $p$ and $q$. 
The smallest $t$ having this property is called the \emph{stretch factor} 
of the graph $G$. Thus, a graph with stretch factor $t$ approximates the 
$n \choose 2$ distances between the points in $P$ within a factor of $t$. 
The problem of constructing $t$-spanners with $O(n)$ edges for any given 
point set has been studied intensively; see the book by 
Narasimhan and Smid~\cite{smid07} for an overview. 

In this paper, we consider the problem of computing $t$-spanners whose 
chromatic number is at most $k$, for some given value of $k$. The 
goal is to minimize the value of $t$ over all finite sets $P$ of 
points in the plane. We call a spanner whose chromatic number is at most 
$k$ a $k$-\emph{chromatic spanner}.

\begin{problem}
\label{prob-k-col}
Given an integer $k \geq 2$, let $t(k)$ be the infimum of all real
numbers $t$ with the property that for every finite set $P$ of points in 
the plane, a $k$-chromatic $t$-spanner for $P$ exists. Determine 
the value of $t(k)$.
\end{problem}

Observe that in the definition of $t(k)$, there is no requirement on 
the number of edges of the chromatic spanner. This is not a restriction, 
because, as shown by Gudmundsson \emph{et al.}~\cite{glns-adogg-02}, 
any $t$-spanner for $P$ contains a subgraph with $O(n)$ edges which 
is a $((1+\epsilon)t)$-spanner for $P$.    

We show 
how to obtain a $2$-chromatic $3$-spanner for any point set $P$, 
thus showing that $t(2) \leq 3$. 
We also give an example of a point set $P$ such that any $2$-chromatic 
graph with vertex set $P$ has stretch factor at least three. Thus, 
we have $t(2)=3$. 

Next, we show how to compute a $3$-chromatic $2$-spanner of any point set $P$,
thereby proving that $t(3) \leq 2$. 
We also show, by means of an example, that $t(3) \geq 2$. Thus, 
we obtain that $t(3)=2$.  For $k=4$, we show how to compute a 4-chromatic $\sqrt{2}$-spanner of any point set $P$; 
thus  $t(4) \leq \sqrt{2}$.      
Again by means of an example, we also show that $t(4) \geq \sqrt{2}$. 
Therefore, we have $t(4) = \sqrt{2}$. 

For $k>4$, we are not able to obtain the exact value of $t(k)$. 
Inspired by the \emph{ordered $\Theta$-graph} of 
Bose \emph{et al.}~\cite{bose04a}, we show that 
$t(k) \leq 1 + 2 \sin \frac{\pi}{2(k-1)}$. We also show that the 
vertex set of the regular $(k+1)$-gon gives 
$t(k) \geq 1 / \cos \frac{\pi}{k+1}$. 

In the second part of the paper, we consider an on-line variant of the 
problem where the points of $P$ are given one after another, and the 
color of a point must be assigned at the moment when the point is given; 
thus, later on, the color of a point cannot be changed.
This makes the problem more difficult.
Consequently, the bounds are higher, but still tight for $k=2,3,4$. All our bounds are 
summarized in Table~\ref{tab-summary}.

\begin{problem}
\label{prob-k-colonline}
Given an integer $k \geq 2$, let $t'(k)$ be the infimum of all real
numbers $t$ with the property that for every finite set $P$ of points in 
the plane, which is given on-line, a $k$-chromatic $t$-spanner for 
$P$ exists. Determine the value of $t'(k)$.
\end{problem}

A simple variant of the ordered $\Theta$-graph shows that 
$t'(k) \leq 1 + 2 \sin (\pi/k)$. Thus, we have 
$t'(2) \leq 3$, $t'(3) \leq 1 + \sqrt{3}$ and $t'(4) \leq 1 + \sqrt{2}$. 
Since $t'(2) \geq t(2) = 3$, it follows that $t'(2) = 3$. 
We also give examples showing that $t'(3) \ge 1 + \sqrt{3}$ and 
$t'(4) \ge 1 + \sqrt{2}$. We finally show that, for $k \geq 5$,   
$t'(k) \geq 1 / \cos \frac{\pi}{k}$. 

The rest of this paper is organized as follows: in Section~\ref{section-ellipse}, we define the $t$-ellipse property and show its relationship to our problem. In Section~\ref{section-offline}, we give upper and lower bounds for the off-line problem (Problem~\ref{prob-k-col}). In Section~\ref{section-online}, we give give upper and lower bounds for the on-line problem (Problem~\ref{prob-k-colonline}). In Section~\ref{section-simres-chromatic}, we present simulation results. We conclude in Section~\ref{section-conclusion}. In Table~\ref{tab-summary}, we summarize our results. We now motivate our work.


\begin{table}
\begin{center}
\begin{tabular}{|c|c|c|c|c|}\hline
number of colors & 
         \multicolumn{2}{c|}{$t(k)$ (off-line)} & 
         \multicolumn{2}{c|}{$t'(k)$ (on-line)} \\ 
\hline 
$k$    & lower bound & upper bound & lower bound   & upper bound  \\ 
\hline  
2      & $3$         & $3$         & $3$           & $3$        \\ 
\hline 
3      & $2$         & $2$         & $1+\sqrt{3}$  & $1+\sqrt{3}$\\ 
\hline 
4      & $\sqrt{2}$  & $\sqrt{2}$  & $1+\sqrt{2}$  & $1+\sqrt{2}$ \\ 
\hline 
$k$   & $1/ \cos{\frac{\pi}{k+1}}$ & $1+2\sin{\frac{\pi}{2(k-1)}}$ & $1/ \cos{\frac{\pi}{k}}$ & $1+2\sin{\frac{\pi}{k}}$\\
\hline
\end{tabular}
\end{center}
\caption{Summary of our results.}
\label{tab-summary}
\end{table}

\paragraph{Motivation:} 
In a recent paper, Raman and Chebrolu \cite{raman05} proposed a new
protocol, called 2P, allowing to address rural Internet connectivity
in a low-cost manner using off-the-shelf 802.11 hardware. Since
their infrastructure uses several directional antennae at one node
rather than one single omnidirectional antenna, simultaneous
communications are possible at one node. However, due to
restrictions inherent in the 802.11 standard, backbone nodes have to
communicate with each other using a single channel. While
simultaneous transmissions and simultaneous receptions are 
possible, it is not physically possible for one node to both
transmit and receive at the same time. Therefore, backbone nodes
have to alternate between the send and receive states (see
Figure~\ref{fig-rural}). This forces the backbone to be a bipartite
graph, i.e., to have chromatic number two.

\begin{figure}
 \centering\includegraphics{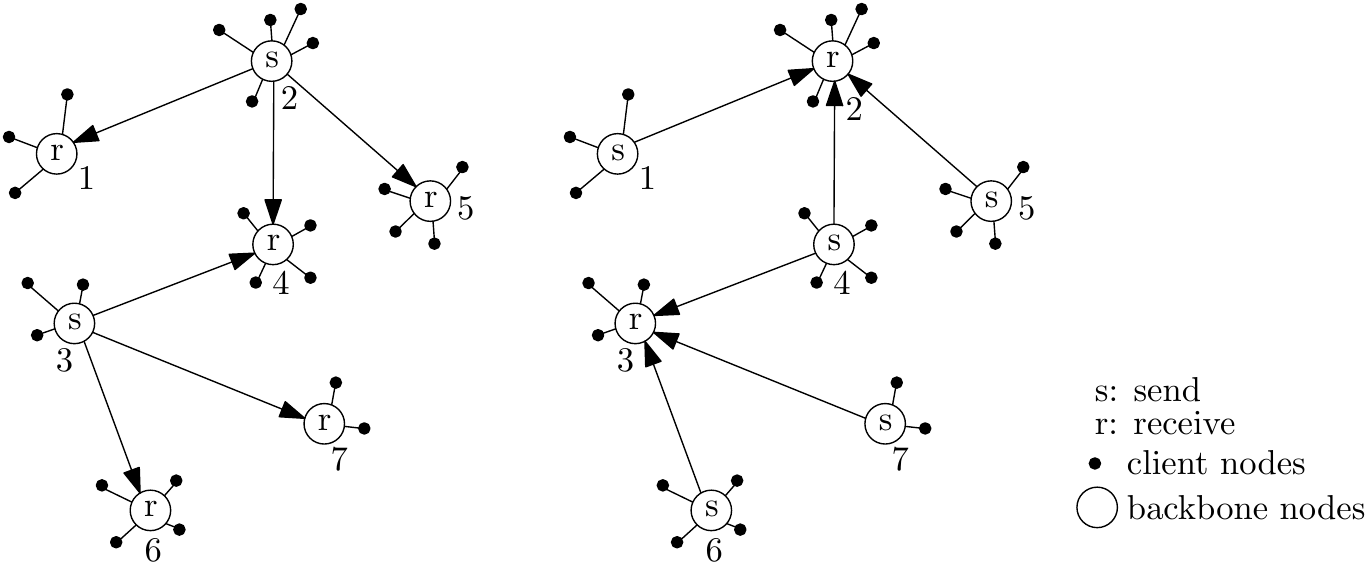} 
 \caption{The two possible states of the backbone nodes.}
 \label{fig-rural}
\end{figure}

The backbone creation algorithm of Raman and Chebrolu \cite{raman05}
outputs a tree, which is obviously bipartite. However, the
tree structure presents the following disadvantage: it is possible
that the path that a message has to follow is much longer than the
distance (either Euclidean or in terms of hops) between the
originating node and its destination. For example, in
Figure~\ref{fig-rural}, a message routed from node~1 to node~3 has
to go through nodes~2 and~4, whereas a direct link between~1 and~3
could be added while still satisfying the bipartition requirement.

Note that the physical constraint preventing nodes to simultaneously
receive and transmit can be met even if the graph is not bipartite.
In fact, any graph with chromatic number $k$ would meet this
requirement: all one has to do is to prevent two nodes that have
different colors to transmit simultaneously. A degenerate case is
when each node has its own color, in which case at most one node can
transmit at any given moment. This case is undesirable, since the
amount of time during which a node can transmit decreases as the
size of the network increases.

For these reasons, it is desirable to have geometric graphs that
have both small chromatic number and small stretch factor.

\section{The $t$-Ellipse Property}\label{section-ellipse}
In this section, we show that Problem~\ref{prob-k-col}, i.e., 
determining the smallest value of $t$ such that a $k$-chromatic 
$t$-spanner exists for any point set $P$, is equivalent to minimizing 
the value of $t$ such that any point set can be colored using $k$ 
colors in a way that satisfies the so-called \emph{$t$-ellipse property}. 

\begin{definition}[$t$-ellipse property] 
Let $k \geq 2$ be an integer, let $P$ be a finite set of points in the 
plane and let $c: P \rightarrow \{1,\ldots, k\}$ be a $k$-coloring of 
$P$. We say that that the coloring $c$ satisfies the 
$t$-\emph{ellipse property} if, for each pair of distinct points
$p$ and $q$ in $P$ with $c(p)=c(q)$, there exists a point $r \in P$
such that $c(r) \neq c(p)$ and $|pr|+|rq|\leq t|pq|$.
\end{definition}

Thus, if $p$ and $q$ have the same color, then the ellipse 
$\{ x \in \mathbb{R}^2 : |px|+|xq|\leq t|pq| \}$ contains a point $r$ of 
$P$ whose color is different from that of $p$ and $q$.   

\begin{proposition}  \label{helloworld}  
Let $k \geq 2$, let $P$ be a set of points in the plane, and let $G$ be 
a $k$-chromatic $t$-spanner of $P$ with $k$-coloring $c$. Then $c$ 
satisfies the $t$-ellipse property.
\end{proposition}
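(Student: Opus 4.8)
The plan is to argue directly from the definitions. Let $p$ and $q$ be two distinct points of $P$ with $c(p) = c(q)$; we must produce a point $r \in P$ with $c(r) \neq c(p)$ and $|pr| + |rq| \leq t|pq|$. Since $G$ is a $t$-spanner, there is a path $\pi$ in $G$ from $p$ to $q$ whose total Euclidean length is at most $t|pq|$. The idea is that $\pi$ must ``use'' a vertex of a color different from $c(p)$, and any such vertex on $\pi$ serves as the desired $r$, because the subpath of $\pi$ from $p$ to $r$ has length at least $|pr|$ and the subpath from $r$ to $q$ has length at least $|rq|$ (triangle inequality along the polygonal path), so $|pr| + |rq| \leq \mathrm{length}(\pi) \leq t|pq|$.

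First I would note that $p$ and $q$ are not adjacent in $G$: if $pq$ were an edge, its two endpoints would have the same color $c(p) = c(q)$, contradicting that $c$ is a proper $k$-coloring of $G$. Hence any path $\pi$ from $p$ to $q$ in $G$ has at least one internal vertex. Next I would pick such a path $\pi$ with $\mathrm{length}(\pi) \leq t|pq|$, which exists since $G$ is a $t$-spanner, and let $v_0 = p, v_1, \dots, v_m = q$ be its vertices, with $m \geq 2$. Consider the internal vertices $v_1, \dots, v_{m-1}$. I claim at least one of them has a color different from $c(p)$: indeed $v_1$ is adjacent to $v_0 = p$, so $c(v_1) \neq c(p)$ already does it. Set $r := v_1$. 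Then $c(r) \neq c(p) = c(q)$ as required.

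It remains to check the distance bound for this $r$. Writing $\pi$ as the concatenation of the subpath $\pi_1$ from $p$ to $r$ and the subpath $\pi_2$ from $r$ to $q$, the triangle inequality applied repeatedly along the polygonal chain gives $\mathrm{length}(\pi_1) \geq |pr|$ and $\mathrm{length}(\pi_2) \geq |rq|$. Therefore
\[
  |pr| + |rq| \leq \mathrm{length}(\pi_1) + \mathrm{length}(\pi_2) = \mathrm{length}(\pi) \leq t|pq|,
\]
which is exactly the $t$-ellipse property for the pair $p, q$. Since $p, q$ were an arbitrary monochromatic pair, $c$ satisfies the $t$-ellipse property.

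This argument is essentially routine; there is no real obstacle. The only point requiring a moment's care is ensuring that the witness $r$ is genuinely a vertex of $P$ with a legal (different) color, which is handled by observing that a neighbor of $p$ on the path cannot share $p$'s color in a proper coloring — this is where properness of the $k$-coloring, as opposed to an arbitrary assignment of $k$ labels, is used. One should also remember that $|pq| > 0$ since $p \neq q$, so the inequality $|pr| + |rq| \leq t|pq|$ is meaningful.
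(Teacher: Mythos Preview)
Your proof is correct and follows essentially the same approach as the paper: take a $t$-spanning path from $p$ to $q$, let $r$ be the vertex on this path adjacent to $p$, use properness of the coloring to get $c(r)\neq c(p)$, and bound $|pr|+|rq|$ by the path length via the triangle inequality. You include a few extra justifications (e.g., that $p$ and $q$ are non-adjacent so the path has an internal vertex) that the paper leaves implicit, but the argument is the same.
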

\begin{proof}
Let $p,q\in P$ be two points with $c(p)=c(q)$. Since $G$ is a $t$-spanner, there exists a $t$-spanning
path $\Pi$ in $G$ from $p$ to $q$. Let $r$ be the point on $\Pi$ that is adjacent to $p$. Since the length of $\Pi$ is at most $t|pq|$, we note that $|pr|+|rq|$ is at most $t|pq|$. Since the edge $(p,r)$
is in $G$, it follows that $c(p)\neq c(r)$. Therefore, $c$ satisfies the $t$-ellipse property.\qed
\end{proof}

\begin{proposition}   \label{prop-komplete} 
Let $k \geq 2$, let $P$ be a set of points in the plane, and let 
$c:P\rightarrow\{1,\ldots ,k\}$ be a $k$-coloring of $P$ that
satisfies the $t$-ellipse property. Then, there exists a $k$-chromatic
$t$-spanner of $P$.
\end{proposition}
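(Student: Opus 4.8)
The plan is to construct the spanner $G$ directly from the coloring. Given the $k$-coloring $c$ satisfying the $t$-ellipse property, I would take $G$ to be the complete graph on $P$ with one natural modification: the color classes are independent sets, so for the graph to actually be $k$-chromatic with coloring $c$, I cannot include edges between two points of the same color. So the candidate is $G = (P, E)$ where $E = \{(p,q) : c(p) \neq c(q)\}$, the complete $k$-partite graph induced by $c$. This graph has chromatic number at most $k$ by construction (the color classes of $c$ give a proper $k$-coloring), so the only thing left to verify is that $G$ is a $t$-spanner.

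The key step is the spanning bound. For a pair $p,q$ with $c(p) \neq c(q)$, the edge $(p,q)$ itself is in $E$, so the path consisting of that single edge has length $|pq| \le t|pq|$, and we are done for such pairs. The interesting case is a pair $p,q$ with $c(p) = c(q)$. Here the $t$-ellipse property hands us a point $r \in P$ with $c(r) \neq c(p)$ and $|pr| + |rq| \le t|pq|$. Since $c(r) \neq c(p) = c(q)$, both edges $(p,r)$ and $(r,q)$ belong to $E$, so $p \to r \to q$ is a path in $G$ of length $|pr| + |rq| \le t|pq|$. Hence every pair of points is connected by a path of length at most $t$ times their Euclidean distance, so $G$ is a $t$-spanner.

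I do not anticipate a serious obstacle here: the proposition is essentially the converse bookkeeping to Proposition~\ref{helloworld}, and the two-hop detour through the witness point $r$ is exactly what the $t$-ellipse property was designed to supply. The one point that needs a word of care is the degenerate possibility that $P$ has all its points in a single color class — then $E$ would be empty and $G$ disconnected — but the $t$-ellipse property rules this out as soon as $|P| \ge 2$, since any same-colored pair forces the existence of a differently colored point; and if $|P| \le 1$ the statement is vacuous. I would also remark in passing that, although this $G$ has $\Theta(n^2)$ edges, the result of Gudmundsson \emph{et al.}~\cite{glns-adogg-02} quoted in the introduction lets one thin it down to $O(n)$ edges at the cost of an arbitrarily small factor in the stretch, which is consistent with the sparsified statement promised in the abstract.
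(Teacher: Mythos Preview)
Your proposal is correct and follows essentially the same approach as the paper: you build the complete $k$-partite graph $K_c(P)$ induced by the coloring, observe it is $k$-colorable by construction, and verify the $t$-spanner property by using the witness $r$ from the $t$-ellipse property to route same-colored pairs through the two-hop path $p \to r \to q$. The paper's proof is exactly this; your added remarks on the degenerate single-color case and on sparsification via Gudmundsson \emph{et al.}\ are fine but not needed for the proposition itself.
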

\begin{proof}
Let $K_c(P)$ be the complete $k$-partite graph with vertex set $P$ 
in which there is an edge between two points $p$ and $q$ if and only if
$c(p)\neq c(q)$. By definition, $K_c(P)$ is $k$-colorable. 
We show that $K_c(P)$ is a $t$-spanner of $P$. Let $p$ and $q$ be two 
distinct points of $P$ such that $(p,q)$ is not an edge in $K_c(P)$. 
This means that $c(p)=c(q)$. Since $c$ has the $t$-ellipse property, 
there exists a point $r$ in $P$ such that $c(r)\neq c(p)$ and 
$|pr|+|rq|\leq t|pq|$. Since $c(r)\neq c(p)$ (and consequently, 
$c(r)\neq c(q)$), the edges $(p,r)$ and $(r,q)$ are both in $K_c(P)$. 
This means that $(p,r,q)$ is a $t$-spanner path in $K_c(P)$ between 
$p$ and $q$. 
\qed
\end{proof}

From this point on, unless specified otherwise, we define the \emph{stretch factor} of a $k$-coloring of 
a point set to be the stretch factor of the complete $k$-partite graph 
induced by this coloring. By Propositions~\ref{helloworld}
and~\ref{prop-komplete}, the problem of determining $t(k)$ is 
equivalent to determining the minimum stretch factor of any 
$k$-coloring of any point set. 
 
We conclude this section by showing why it is sufficient to focus on 
the coloring problem without worrying about the number of edges in 
the spanner. The following theorem is due to 
Gudmundsson \emph{et al.}~\cite{glns-adogg-02}; its proof is based on 
the the well-separated pair decomposition of 
Callahan and Kosaraju~\cite{callahan95}. 

\begin{theorem}  \cite{glns-adogg-02}  \label{thm-wspd} 
Let $\epsilon>0$ and $t \geq 1$ be constants, let $P$ be a set of $n$ 
points in the plane, and let $G$ be a $t$-spanner of $P$. There exists 
a subgraph $G'$ of $G$, such that $G'$ is a $((1+\epsilon)t)$-spanner 
of $P$ and $G'$ has $O(n)$ edges.
\end{theorem}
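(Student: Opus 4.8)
\noindent\emph{Proof proposal.} Since this is a known result, the plan is to reproduce the standard argument based on the well-separated pair decomposition (WSPD) of Callahan and Kosaraju~\cite{callahan95}. First I would fix a separation parameter $s$, to be taken as a constant of the form $s=\Theta((1+t)/\epsilon)$, and invoke Callahan and Kosaraju to obtain a WSPD $\{(A_1,B_1),\ldots,(A_m,B_m)\}$ of $P$ with separation $s$: each $A_i,B_i\subseteq P$, the distance $d(A_i,B_i)$ is at least $s\max(\operatorname{diam}(A_i),\operatorname{diam}(B_i))$, every pair of distinct points of $P$ is separated by exactly one index $i$, and $m=O(s^2n)=O(n)$ (the exponent being the dimension~$2$). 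I would then define $G'$ by keeping, for each index $i$ such that $G$ has at least one edge with one endpoint in $A_i$ and the other in $B_i$, exactly one such crossing edge (chosen arbitrarily), and nothing for the other indices. The edge bound is then immediate: $|E(G')|\le m=O(n)$, so the remaining work is to show that $G'$ is a $((1+\epsilon)t)$-spanner.

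The heart of the proof will be the claim that $d_{G'}(p,q)\le(1+\epsilon)|pq|$ for \emph{every edge} $(p,q)$ of $G$, which I would establish by strong induction on the finitely many distinct edge-lengths of $G$. Given the claim, the theorem follows at once: for arbitrary $p,q\in P$ one takes a shortest $G$-path $p=x_0,\ldots,x_\ell=q$ (of length $\le t|pq|$), applies the claim to each edge $(x_j,x_{j+1})$, and sums, obtaining $d_{G'}(p,q)\le(1+\epsilon)\sum_j|x_jx_{j+1}|\le(1+\epsilon)t|pq|$. For the inductive step on an edge $(p,q)$, let $(A_i,B_i)$ be the pair separating $p$ and $q$; since $(p,q)$ itself crosses this pair, $G'$ holds some crossing edge $(u,v)$ with $u\in A_i$, $v\in B_i$, and one checks $|uv|\le(1+2/s)|pq|$ (both $|uv|$ and $|pq|$ lie within $2d(A_i,B_i)/s\le 2|pq|/s$ of $d(A_i,B_i)$, using $d(A_i,B_i)\le|pq|$). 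I would route $p\rightsquigarrow u\to v\rightsquigarrow q$ in $G'$ and bound the two in-cluster detours recursively: $p,u\in A_i$ gives $|pu|\le\operatorname{diam}(A_i)\le|pq|/s$, and a shortest $G$-path between $p$ and $u$ has length $\le t|pu|\le t|pq|/s$, so each of its edges is shorter than $|pq|$ as soon as $s>t$; applying the induction hypothesis edge by edge yields $d_{G'}(p,u)\le(1+\epsilon)t|pq|/s$, and symmetrically for $d_{G'}(v,q)$. Adding up gives $d_{G'}(p,q)\le|pq|(1+2/s+2(1+\epsilon)t/s)$, which is $\le(1+\epsilon)|pq|$ once $s\ge(2+2(1+\epsilon)t)/\epsilon$; this choice also satisfies $s>t$.

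The main obstacle---and the point most worth getting right---is the bookkeeping in the inductive step. The slack must be charged against the length of the single edge under consideration and must telescope into one overall factor $1+\epsilon$; if one instead reduces $(p,q)$ to a full $G$-path before inducting, the slack compounds with $t$ and one only gets the useless bound $(1+\epsilon)t^2$, so isolating the per-edge claim and invoking the path decomposition only at the top level is essential. One also has to verify that the induction is well-founded (every recursive call is on a $G$-edge of strictly smaller length, which is exactly what forces $s>t$) and dispatch the base case: for an edge of minimum $G$-length, a nontrivial $G$-path inside a cluster would contain an edge of length $\le t|pq|/s<|pq|$, contradicting minimality, so there $u=p$, $v=q$ and $d_{G'}(p,q)=|pq|$. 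The algorithmic refinements (Callahan and Kosaraju build the WSPD in $O(n\log n)$ time, and the in-cluster shortest paths can be found by localized searches) are not needed for the existential statement claimed here.
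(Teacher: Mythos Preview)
The paper does not prove this theorem; it is quoted verbatim from Gudmundsson \emph{et al.}~\cite{glns-adogg-02}, with only the remark that ``its proof is based on the well-separated pair decomposition of Callahan and Kosaraju~\cite{callahan95}.'' Your proposal correctly reconstructs that standard WSPD argument (pick one crossing $G$-edge per well-separated pair, prove the per-edge $(1+\epsilon)$ bound by induction on edge length, then sum along a shortest $G$-path), so it is consistent with---indeed considerably more detailed than---what the paper itself provides.
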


\begin{proposition} 
\label{prop-gudm}
Let $k \geq 2$, let $P$ be a set of $n$ points in the plane, and let 
$c:P\rightarrow\{1,\ldots ,k\}$ be a $k$-coloring of $P$ that satisfies 
the $t$-ellipse property. Then, for any constant $\epsilon > 0$, there 
exists a $k$-chromatic $((1+\epsilon)t)$-spanner of $P$ that has 
$O(n)$ edges. 
\end{proposition}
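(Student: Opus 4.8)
The plan is to combine Proposition~\ref{prop-komplete} with Theorem~\ref{thm-wspd} directly. First I would invoke Proposition~\ref{prop-komplete}: since the coloring $c$ satisfies the $t$-ellipse property, the complete $k$-partite graph $K_c(P)$ induced by $c$ is a $k$-chromatic $t$-spanner of $P$. This is the graph $G$ to which I will apply the edge-reduction theorem.

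Next I would apply Theorem~\ref{thm-wspd} with this $G = K_c(P)$ and the given $\epsilon > 0$: this yields a subgraph $G'$ of $G$ that is a $((1+\epsilon)t)$-spanner of $P$ and has $O(n)$ edges. The only remaining point is to observe that $G'$ inherits a proper $k$-coloring: since $G'$ is a subgraph of the $k$-chromatic graph $G$, the restriction of $c$ to the vertices of $G'$ (which is all of $P$) is still a proper $k$-coloring, because any edge of $G'$ is an edge of $G$ and hence joins two points of different color. Therefore $G'$ is a $k$-chromatic $((1+\epsilon)t)$-spanner of $P$ with $O(n)$ edges, which is exactly the claim.

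I do not anticipate a genuine obstacle here; the statement is essentially a corollary obtained by chaining two already-established results, and the only thing to be careful about is the bookkeeping that the chromatic number does not increase when passing to a subgraph — which is immediate. One minor caveat worth a sentence in the write-up: Theorem~\ref{thm-wspd} is stated for constant $\epsilon$ and $t$, so I would note that $t \le 3$ (or more generally $t$ is bounded by the constants appearing in this paper) and $\epsilon$ is a fixed constant, so the hypotheses of the theorem are met and the $O(n)$ bound is meaningful.
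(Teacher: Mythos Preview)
Your proposal is correct and follows exactly the same route as the paper: invoke Proposition~\ref{prop-komplete} to get a $k$-chromatic $t$-spanner $G$, apply Theorem~\ref{thm-wspd} to extract a sparse $((1+\epsilon)t)$-spanner $G'\subseteq G$, and note that a subgraph of a $k$-chromatic graph is still $k$-chromatic. Your extra remark about $t$ and $\epsilon$ being constants is a harmless bit of additional care that the paper's proof omits.
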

\begin{proof} 
By Proposition~\ref{prop-komplete}, there exists a $k$-chromatic 
$t$-spanner $G$ of $P$. By Theorem~\ref{thm-wspd}, $G$ contains a 
subgraph $G'$ with $O(n)$ edges, such that $G'$ is a 
$((1+\epsilon)t)$-spanner of $P$. Since $G$ is $k$-chromatic, 
$G'$ is $k$-chromatic as well. 
\qed  
\end{proof}

\section{Upper and lower bounds on $t(k)$}
\label{section-offline}

The structure of this section is as follows: For $k=2,3,$ and $4$, we 
give coloring algorithms whose outputs have bounded stretch factor. 
Then, we show that these stretch factors are tight by providing point 
sets for which no coloring algorithm can achieve a better stretch 
factor. Then we present our upper and lower bounds for $t(k)$, when $k>4$. 

We now give the coloring algorithm for $k=2$.

\begin{algorithm}
\caption{Offline 2 Colors}\label{alg-off2}
\begin{algorithmic}[1]
\REQUIRE $P$, a set of points in the plane 
\ENSURE $c$, a 2-coloring of $P$ 
\STATE Compute a Euclidean minimum spanning tree $T$ of $P$
\STATE $c \leftarrow$ a 2-coloring of $T$
\end{algorithmic}
\end{algorithm}


\begin{lemma}  \label{prop-ub2off}
For any point set $P$, the 2-coloring computed by 
Algorithm~\ref{alg-off2} has stretch factor at most 3. Thus, we have 
$t(2)\leq 3$.
\end{lemma}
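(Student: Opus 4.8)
The plan is to show that the 2-coloring $c$ obtained by properly 2-coloring a Euclidean minimum spanning tree $T$ of $P$ satisfies the 3-ellipse property; the bound $t(2)\le 3$ then follows immediately from Proposition~\ref{prop-komplete}. So let $p,q\in P$ be two distinct points with $c(p)=c(q)$. Because $c$ is a proper 2-coloring of the tree $T$, the (unique) path in $T$ between $p$ and $q$ has even length, and in particular it has at least two edges; let $r$ be the neighbor of $p$ on this path. Then $c(r)\neq c(p)$, so $r$ is a legitimate witness as far as colors are concerned, and it remains to bound $|pr|+|rq|$ by $3|pq|$.

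For this I would use the defining property of a Euclidean minimum spanning tree: the edge $(p,r)$ of $T$ is the shortest edge crossing the cut of $T$ obtained by deleting $(p,r)$, and since $p$ and $q$ lie on opposite sides of that cut, we get $|pr|\le |pq|$. Next, bound $|rq|$ by the triangle inequality: $|rq|\le |rp|+|pq|\le 2|pq|$. Adding the two estimates gives $|pr|+|rq|\le 3|pq|$, which is exactly the 3-ellipse condition with witness $r$. Since $p,q$ were an arbitrary monochromatic pair, $c$ satisfies the 3-ellipse property, and Proposition~\ref{prop-komplete} yields a 2-chromatic 3-spanner of $P$, hence $t(2)\le 3$.

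The only real point requiring care — and hence the main (modest) obstacle — is justifying $|pr|\le |pq|$ from the MST property: one must observe that removing the edge $(p,r)$ splits $T$ into two components with $p$ in one and $q$ in the other (this is where it matters that $r$ lies on the $T$-path from $p$ to $q$), so that $(p,q)$ is one of the edges crossing this cut; by the cut property of minimum spanning trees the tree edge $(p,r)$ is a minimum-weight edge across the cut, whence $|pr|\le|pq|$. Everything else is the triangle inequality, and the parity remark guarantees that such an $r$ exists (i.e.\ $p$ and $q$ are not adjacent in $T$, and indeed the $T$-path has a well-defined first intermediate vertex). Strictly speaking one should also handle the degenerate possibility that some points of $P$ coincide or are collinear, but this does not affect the argument.
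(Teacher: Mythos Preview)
Your proof is correct and follows essentially the same approach as the paper: establish the 3-ellipse property by exhibiting, for any monochromatic pair $p,q$, an MST-neighbor $r$ of $p$ with $|pr|\le|pq|$, and then apply the triangle inequality. The only cosmetic difference is the choice of witness: the paper takes $r$ to be the nearest neighbor of $p$ in $P$ (which is automatically an MST edge and gives $|pr|\le|pq|$ directly), whereas you take $r$ to be the first vertex on the $T$-path from $p$ to $q$ and invoke the cut property; both choices work for the same reason.
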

\begin{proof} 
It is sufficient to show that the 2-coloring $c$ computed by
Algorithm~\ref{alg-off2} has the 3-ellipse property. 
Let $p$ and $q$ be two distinct points in $P$ such that $c(p)=c(q)$.  
Observe that $(p,q)$ is not an edge in the minimum spanning tree $T$. 
Let $r$ be the nearest neighbor of $p$. Since the edge $(p,r)$ is in 
$T$, we have $r\neq q$ and $c(r)\neq c(p)$. Since $r$ is closer to $p$ 
than $q$, we have 
\[ |pr|+|rq| \leq |pr|+|rp|+|pq| = 2|pr|+|pq| \leq 2|pq| + |pq|
      =  3|pq| .
\] 
\qed 
\end{proof}

\begin{figure}
\centering\includegraphics{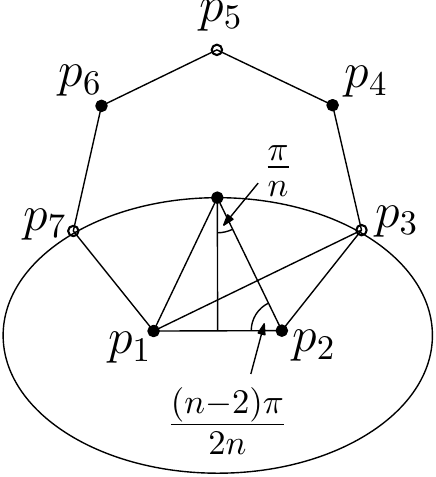}\caption{Lower bound of
$3-\epsilon$ for $k=2$.}\label{fig-lb2off}
\end{figure}

\begin{lemma}\label{prop-lb2off} 
For every $\epsilon>0$, there exists a point set $P$ such that every
$2$-coloring of $P$ has stretch factor at least $3-\epsilon$. 
Thus, we have $t(2)\geq 3$.
\end{lemma}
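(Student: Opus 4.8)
The plan is to take as $P$ the vertex set of a regular $n$-gon for a well-chosen \emph{odd} integer $n$. The idea is that a regular polygon with many vertices is, locally, almost a straight segment, so the cheapest detour from one vertex to a cyclically adjacent one that avoids the connecting edge is ``one step back, then two steps forward'', which costs roughly three times the side length; and because $n$ is odd, no $2$-coloring can prevent such a monochromatic adjacent pair from occurring. Concretely, I would first fix $\epsilon>0$ and choose an odd $n$ with $1+2\cos(\pi/n)\ge 3-\epsilon$ (possible since $1+2\cos(\pi/n)\to 3$ as $n\to\infty$), and let $p_1,\dots,p_n$ be the vertices in cyclic order, with circumradius $\rho$ and side length $s=2\rho\sin(\pi/n)$, so that $|p_ap_b|=2\rho\sin(\pi\, d(a,b)/n)$ where $d(a,b)$ denotes cyclic index distance.

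The second step is purely combinatorial: any $2$-coloring $c$ of $P$ gives the same color to two cyclically consecutive vertices. Indeed, traversing $p_1,\dots,p_n,p_1$ the color changes an even number of times, so the number of equal-colored consecutive pairs is $n$ minus an even number, hence odd, hence positive, since $n$ is odd. Fix such a monochromatic pair $(p_i,p_{i+1})$, at distance $s$.

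The third step, which I expect to be the main obstacle, is the geometric estimate: for \emph{every} vertex $p_j$ with $j\notin\{i,i+1\}$, $|p_ip_j|+|p_jp_{i+1}|\ge(1+2\cos(\pi/n))\,s$. This length equals $2\rho\bigl(\sin\tfrac{\pi d(i,j)}{n}+\sin\tfrac{\pi d(i+1,j)}{n}\bigr)$, where the two cyclic distances are consecutive integers $\{\ell,\ell+1\}$ when $p_j$ lies near the pair and are both close to $n/2$ when $p_j$ lies near the antipode. Rewriting it as $4\rho\cos\tfrac{\pi}{2n}\sin\tfrac{\pi(2\ell+1)}{2n}$ and using unimodality of $\sin$ on $[0,\pi]$, the minimum over all admissible $j$ is attained at the immediate neighbors $p_{i-1}$ and $p_{i+2}$ and equals $2\rho(\sin\tfrac{\pi}{n}+\sin\tfrac{2\pi}{n})=(1+2\cos\tfrac{\pi}{n})s$. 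The only delicate bookkeeping is ruling out the ``wrap-around'' vertices near the antipode, for which the chord is far shorter than the arc but the detour still has length of order $\rho$, hence is $\gg s$; so they are never the minimizer.

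Finally I would combine the steps. If $P$ admitted a $2$-chromatic $t$-spanner with coloring $c$, then by Proposition~\ref{helloworld} the monochromatic pair $(p_i,p_{i+1})$ would have a differently colored witness $r\in P$ with $|p_ir|+|rp_{i+1}|\le t\,|p_ip_{i+1}|=ts$; but the geometric step forces $|p_ir|+|rp_{i+1}|\ge(1+2\cos(\pi/n))s\ge(3-\epsilon)s$, so $t\ge 3-\epsilon$. Equivalently, the complete bipartite graph of any $2$-coloring of $P$ has stretch factor at least $3-\epsilon$, which is the assertion of the lemma; letting $\epsilon\to 0$ then yields $t(2)\ge 3$.
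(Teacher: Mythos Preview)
Your proposal is correct and follows the same approach as the paper: take the vertex set of a regular $n$-gon with $n$ odd, use parity/pigeonhole to find a monochromatic adjacent pair, and observe that the cheapest detour through any third vertex has length approaching $3s$ as $n\to\infty$. The paper's own proof is much terser---it simply asserts that the $t$-ellipse property forces $t\ge |p_1p_3|+1=1+2\cos(\pi/n)$ without justifying that $p_3$ (or $p_n$) is the optimal witness---whereas you supply the explicit minimization over all other vertices; so your argument is the same in spirit but strictly more complete.
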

\begin{proof} 
Let $n$ be an odd integer, and let $P=\{p_1,\ldots,p_n\}$ be the set of 
vertices of a regular $n$-gon given in counter-clockwise order. 
Let $c$ be an arbitrary 2-coloring of $P$. By the pigeonhole principle, 
there are two points in $P$ which are adjacent on the $n$-gon and 
that have the same color. We may assume without loss of generality 
that these two points are $p_1$ and $p_2$. Also, we may assume that 
$|p_1 p_2|=1$ (see Figure~\ref{fig-lb2off}).
Let $t$ be any real number such that $c$ satisfies the $t$-ellipse 
property. Then $|p_1p_3|+1\leq t$. But $|p_1p_3|=2\sin((n-2)\pi/2n)$, 
which tends to 2 as $n$ goes to infinity.
\qed 
\end{proof}

We now consider the case when $k=3$. Our strategy is to construct a 
graph such that any coloring of that graph has the 2-ellipse property. 
We then show that this graph is 3-colorable.

\begin{algorithm}
\caption{Offline 3 Colors}\label{alg-off3}
\begin{algorithmic}[1]
\REQUIRE $P$, a set of $n$ points in the plane
\ENSURE $c$, a 3-coloring of $P$, and $G$, a 3-chromatic graph whose 
vertex set is $P$
\STATE Let $G$ be the graph with vertex set $P$ and whose edge set 
is empty
\STATE Let $e_1,\ldots,e_{\binom{n}{2}}$ be the pairs of points of
$P$ in sorted order of their distances 
\FOR{$i=1$ to $\binom{n}{2}$}
\STATE Let $e_i=(p_i,q_i)$
\IF{$G$ contains no edge $(p,q)$ where 
$|p_ip|+|pq_i|\leq 2|p_iq_i|$ and $|p_iq|+|qq_i|\leq 2|p_iq_i|$}\label{addrule}
\STATE add the edge $e_i$ to $G$
\ENDIF
\ENDFOR
\STATE //assertion: $G$ is 3-colorable (see Lemma~\ref{lemma-3col})
\label{line-assert-3col}
\STATE $c\leftarrow$ a 3-coloring of $G$
\end{algorithmic}
\end{algorithm}



\begin{lemma}   \label{lemma-triangle-free} 
The graph $G$ computed by Algorithm~\ref{alg-off3} is triangle-free.
\end{lemma}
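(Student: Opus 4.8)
The plan is to argue by contradiction. Suppose $G$ contains a triangle on three distinct vertices $a$, $b$, $c$, so that the pairs $\{a,b\}$, $\{b,c\}$, $\{a,c\}$ are all edges of $G$. Each of these three pairs is added to $G$ during some iteration of the main loop of Algorithm~\ref{alg-off3}. Let $e_i=(p_i,q_i)$ be the one among these three pairs that is processed \emph{last} by the algorithm. Since the pairs $e_1,\ldots,e_{\binom{n}{2}}$ are processed in non-decreasing order of distance, the length $|p_iq_i|$ is at least as large as the lengths of the other two triangle edges. Relabel the vertices so that $\{p_i,q_i\}=\{a,c\}$ with $p_i=a$ and $q_i=c$, and let $b$ be the remaining vertex; thus $|ab|\le|ac|$ and $|bc|\le|ac|$.

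Next I would observe that at the start of iteration $i$, the edges $(a,b)$ and $(b,c)$ already belong to $G$: they correspond to pairs processed strictly before $e_i$, they were added (they are edges of the final graph), and edges are never removed. I then claim that the edge $(a,b)$ already present in $G$ makes the test in line~\ref{addrule} fail for $e_i$. Indeed, taking $(p,q)=(a,b)$, we have $|p_ip|+|pq_i|=|aa|+|ac|=|ac|\le 2|ac|=2|p_iq_i|$, and $|p_iq|+|qq_i|=|ab|+|bc|\le|ac|+|ac|=2|ac|=2|p_iq_i|$. Hence $G$ does contain an edge $(p,q)$ with $|p_ip|+|pq_i|\le 2|p_iq_i|$ and $|p_iq|+|qq_i|\le 2|p_iq_i|$, so the condition in line~\ref{addrule} is not met and $e_i$ is \emph{not} added to $G$. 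This contradicts the assumption that $e_i=(a,c)$ is an edge of the triangle, hence of $G$. Therefore $G$ is triangle-free.

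The argument is essentially complete; the only point needing a little care is the treatment of ties in the sorted order of pair distances. I handle this by defining $e_i$ as the triangle edge processed \emph{last} (rather than as a strictly longest one), which guarantees both that the other two triangle edges are present in $G$ when iteration $i$ runs and that their lengths are $\le|p_iq_i|$, as used above. I do not expect any genuine obstacle beyond this bookkeeping, since the core computation $|ab|+|bc|\le 2|ac|$ is immediate once $|ac|$ is known to be the largest of the three lengths.
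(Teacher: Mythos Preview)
Your proof is correct and follows essentially the same approach as the paper's: both argue by contradiction, take the triangle edge processed last (hence of maximal length), and exhibit an already-present edge with one endpoint equal to $p_i$ and the other equal to the third triangle vertex, so that the two ellipse inequalities $|p_ip|+|pq_i|\le 2|p_iq_i|$ and $|p_iq|+|qq_i|\le 2|p_iq_i|$ hold trivially and block the insertion. Your explicit handling of distance ties via ``last processed'' is a minor refinement over the paper's presentation, but the argument is otherwise identical.
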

\begin{proof}
Assume that $G$ contains a triangle with vertices $p$, $q$, and $r$. 
We may assume without loss of generality that $(p,r)$ was the last 
edge of this triangle that was considered by the algorithm. This means that $(p,r)$ is the longest edge of the triangle. 
When $e_i=(p,r)=(p_i,q_i)$ in line~4, $G$ already contains the
edge $(p,q)$. Since $|p_ip|+|pq_i|=|pp|+|pr|\leq 2|pr|$ and
$|p_iq|+|qq_i|=|pq|+|qr|\leq 2|pr|$,
the edge $(p,r)$ is not added to $G$. This is a contradiction and, 
therefore, $G$ is triangle-free.
\qed 
\end{proof}

\begin{figure}
  \centering
\begin{tabular}{cc}
\includegraphics{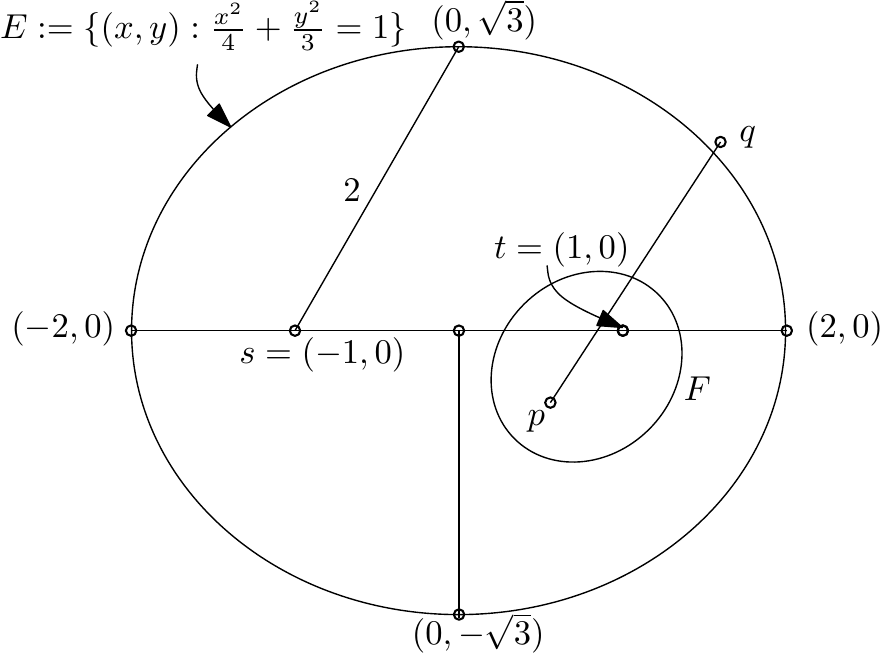} &
\includegraphics{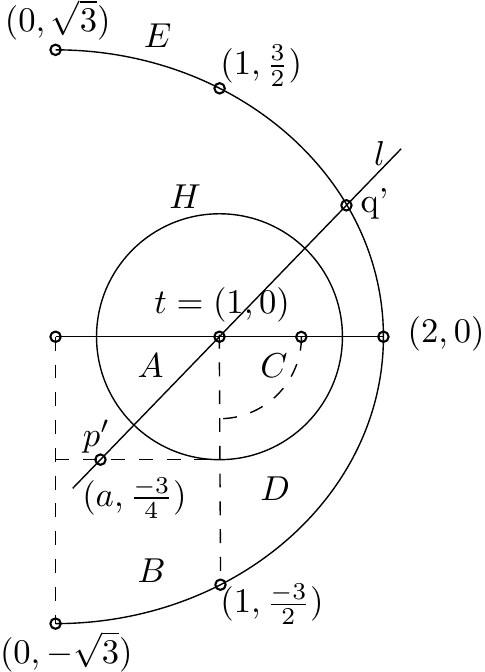}
\end{tabular}
\caption{Proof of Lemma~\ref{lemma-3-planar}.}
\label{fig-3-planar}
\end{figure}

\begin{lemma}  \label{lemma-3-planar} 
The graph $G$ computed by Algorithm~\ref{alg-off3} is plane.
\end{lemma}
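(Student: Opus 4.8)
To show that the graph $G$ produced by Algorithm~\ref{alg-off3} is plane, I would argue by contradiction: suppose two edges $(a,b)$ and $(c,d)$ of $G$ cross at an interior point. The basic tool is the observation that if segments $ab$ and $cd$ cross, then by the triangle inequality applied to the crossing point, the four ``diagonal'' distances are controlled by the two crossing segments; concretely, one of the four points, say $c$, satisfies $|ac| + |cb| < |ab| + $ (something), but more usefully, standard quadrilateral inequalities give $|ac|+|bd| < |ab|+|cd|$ and $|ad|+|bc| < |ab|+|cd|$ when the segments cross. I would combine this with the fact that the algorithm processes pairs in nondecreasing order of length and the blocking rule on line~\ref{addrule}.

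**Key steps.** First, let $(a,b)$ and $(c,d)$ be two crossing edges of $G$, and without loss of generality let $(a,b)$ be the one considered later by the algorithm, so $|cd| \le |ab|$. Since both edges are in $G$, neither was blocked. Second, since the segments cross at a point $x$, I have $|ac| \le |ax|+|xc|$ and $|cb| \le |cx|+|xb|$, hence $|ac|+|cb| \le |ab| + 2\min(|cx|,|xd|,\ldots)$ — I need to extract the clean statement that $|ac|+|cb| \le |ab| + |cd|$ OR a similar bound with the roles arranged so that $c$ (or $d$) lies in the $2|ab|$-ellipse of $a$ and $b$. In fact the crossing gives $|ac|+|cb| \le |ac|+|xc|+|xb| $; I should instead split $cd$ at $x$: $|cx|+|xd| = |cd| \le |ab|$, and $|ax|+|xb| = |ab|$, so at least one of $|cx|,|xd|$ is $\le |ab|/2$ and at least one of $|ax|,|xb|$ is $\le|ab|/2$. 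Then for the appropriate endpoint, say $c$ with $|cx| \le |cd|/2 \le |ab|/2$, I get $|ac| + |cb| \le (|ax|+|xc|) + (|xc| + |xb|) = |ab| + 2|xc| \le |ab| + |cd| \le 2|ab|$. Third, the same argument applied to $d$ splitting $ab$ at $x$ shows $|ad|+|db| \le |ab| + 2|xd|$; but I actually want a statement about both $c$ and a point relative to $a,b$. The cleanest route: pick the endpoint of $cd$ closer to $x$ — call it $c$ — so $|cx| \le |cd|/2$, giving $|ac|+|cb| \le |ab|+|cd| \le 2|ab|$, i.e. $c$ lies in the $2|ab|$-ellipse of the pair $(a,b)$. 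Symmetrically, pick the endpoint of $ab$ closer to $x$ — call it $a$ — so $|ax| \le |ab|/2$, giving $|ca|+|ad| \le |cd|+|ab| \le 2|cd| \cdot (|ab|/|cd|)$... this needs $|cd|$ on the right, which fails since $|cd|\le|ab|$. So instead I get $|ca|+|ad| \le |cd| + 2|ax| \le |cd|+|ab| \le 2|ab|$, which bounds things by $2|ab|$, not $2|cd|$ — not directly the ellipse condition for the pair $(c,d)$. Hence I should route the contradiction entirely through the edge $(a,b)$, the later one: when $e_i = (a,b)$ was processed, the edge $(c,d)$ was already present (it is shorter), and $c$ satisfies $|ac|+|cb| \le 2|ab|$ while $d$ satisfies $|ad|+|db|\le 2|ab|$ — wait, I need a \emph{single} edge $(p,q)$ with \emph{both} $|ap|+|pb|\le 2|ab|$ and $|aq|+|qb|\le 2|ab|$; that edge is exactly $(c,d)$, provided \emph{both} $c$ and $d$ lie in the $2|ab|$-ellipse. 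So the real claim to nail down is: if $cd$ crosses $ab$ and $|cd|\le|ab|$, then both $c$ and $d$ lie in $\{x : |ax|+|xb| \le 2|ab|\}$. For $c$: $|ac|+|cb| \le |ab| + 2|cx|$, and I need $|cx| \le |ab|/2$; this holds iff $c$'s distance to the crossing point is at most $|ab|/2$, which need not be true in general! So I must be more careful: $|cx| \le |cd| \le |ab|$, giving only $|ac|+|cb|\le 3|ab|$. The fix is to also use the other split: $|ac|+|cb| \le |ax| + |xc| + |cb|$ and $|cb| \le |cx| + |xb|$... same thing. I think the correct sharp inequality is obtained from convexity/Ptolemy and needs the hypothesis that \emph{both} segments are edges; I'll present it as: the blocking rule, applied when the later edge of the crossing pair is processed, is violated by the earlier edge, because a crossing forces $|ac|+|cb| \le 2|ab|$ and $|ad|+|db|\le 2|ab|$ simultaneously — citing the elementary lemma that two crossing chords of lengths $\ell_1 \le \ell_2$ satisfy, for the shorter chord's endpoints $u,v$: $|au|+|ub| \le \ell_2$-something.

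**Main obstacle.** The crux, and the step I expect to be delicate, is proving the precise geometric inequality: \emph{if $[a,b]$ and $[c,d]$ cross and $|cd| \le |ab|$, then $|ac|+|cb| \le 2|ab|$ and $|ad|+|db| \le 2|ab|$.} I would prove this via the crossing point $x$: writing $|ac|+|cb| \le (|ax|+|cx|) + (|cx|+|xb|) = |ab| + 2|cx|$ and symmetrically $|ad|+|db| \le |ab| + 2|dx|$; adding these, $|ac|+|cb|+|ad|+|db| \le 2|ab| + 2(|cx|+|dx|) = 2|ab| + 2|cd| \le 4|ab|$, so on average each is $\le 2|ab|$, but I need \emph{each individually}. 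The individual bound requires $\max(|cx|,|dx|) \le |ab|/2$, equivalently $|cd| \le |ab|$ together with $x$ lying on segment $cd$ — and indeed $\min(|cx|,|dx|) \le |cd|/2 \le |ab|/2$ always, but the max can be close to $|cd| \le |ab|$, only giving $\le 3|ab|$ for one of them. So the honest statement is weaker, and the algorithm's rule must be re-read: it blocks $(p_i,q_i)$ if there is \emph{one} edge $(p,q)$ with both endpoints in the ellipse. If a crossing only guarantees \emph{one} endpoint of $(c,d)$ in the $2|ab|$-ellipse of $(a,b)$, that is not enough. Therefore I expect the actual proof (and I will follow the figure ``3-planar-4-cases.pdf'') to do a \textbf{case analysis on the cyclic position of $a,b,c,d$ around the crossing} (the ``4 cases''), in each case exhibiting the specific already-present edge that should have blocked $(a,b)$, using that the two segments cross to bound the relevant pair of ellipse-distances by $2|ab|$ each. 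The case analysis — handling which of $c,d$ is ``inside'' which ellipse depending on the configuration, and possibly invoking triangle-freeness (Lemma~\ref{lemma-triangle-free}) to rule out degenerate sub-cases — is where the real work lies; the rest is bookkeeping with the triangle inequality and the sorted processing order.
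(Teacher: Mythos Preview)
You have correctly pinpointed the obstacle: when the later (longer) edge is $(a,b)$ and the earlier (shorter) edge is $(c,d)$, you cannot in general show that \emph{both} $c$ and $d$ lie in the ellipse $E=\{x:|ax|+|xb|\le 2|ab|\}$. Your diagnosis that ``on average'' they do, but not individually, is exactly right. Where your proposal goes wrong is the proposed fix. A case analysis on the cyclic order of $a,b,c,d$ around the crossing point will not close the gap, because in the bad configuration---exactly one of $c,d$ inside $E$---the existing edge $(c,d)$ simply does not satisfy the blocking condition for $(a,b)$, regardless of cyclic order. There is no already-present edge of $G$ you can point to directly.

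The idea you are missing is to manufacture a \emph{third} pair. Say $c\in E$, $d\notin E$, and choose the endpoint of $(a,b)$ closer to $c$, say $b$ (so $|cb|\le|ca|$, hence $|cb|\le|ab|$). Now consider the pair $(c,b)$ and its ellipse $F=\{x:|cx|+|xb|\le 2|cb|\}$. The crucial geometric lemma---and this is where the real work is---is that $F\subseteq E$. Granting that, the argument finishes cleanly: if any edge of $G$ had both endpoints in $F$, it would have both endpoints in $E$ and would block $(a,b)$; so no such edge exists when $(c,b)$ is processed, hence $(c,b)$ is added to $G$; but then $(c,b)$ itself has both endpoints in $E$ (since $c\in E$ and $b\in E$ trivially), blocking $(a,b)$. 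Contradiction.

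The paper's ``four cases'' are not about cyclic order at all: after normalising $a=(-1,0)$, $b=(1,0)$ and placing $c$ below the $x$-axis, they split on the location of $c$ (roughly: $0\le c_x\le 1$ with $c_y$ near or far from $0$; $c_x>1$ with $|cb|$ small or large). Two of the cases are handled by showing $|cd|>|ab|$ outright; the other two establish $F\subseteq E$ by checking $|ce|+|eb|>2|cb|$ for all $e$ on $\partial E$, and the paper resorts to explicit algebraic verification (Maple) for those. So the step you flagged as ``delicate'' is genuinely hard, but the decomposition driving it is the position of the single interior point $c$, not the cyclic type of the quadrilateral.
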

\begin{proof}
Assume that $G$ contains two crossing edges $(p,q)$ and $(s,t)$. We may 
assume without loss of generality that $s=(-1,0)$, $t=(1,0)$, and the
pair $(s,t)$ has a larger index than $(p,q)$ after the pairs have been 
sorted in line~2. Thus, we have $|pq| \leq |st|$. When
$e_i=(s,t)=(p_i,q_i)$ in line~4, $G$ already contains the edge $(p,q)$.
Let $E$ be the ellipse whose boundary is determined by the set of points $e$ where $|se|+|et|=2|st|$ (see
Figure~\ref{fig-3-planar}, left). If both $p$ and $q$ are outside $E$,
then $|pq|>|st|$. If both $p$ and $q$ are inside $E$, then it 
follows from step \ref{addrule} of the algorithm that the edge $(s,t)$ is not added to $G$. 
Therefore, exactly one point of $\{p,q\}$ is inside $E$.

Without loss of generality, $p$ is inside $E$, $q$ is outside $E$, 
the pair $(p,t)$ has a smaller index than the pair $(p,s)$ after 
the pairs have been sorted in line~2, and $p$ is below the $x$-axis. 
We will show below that the ellipse $F$ whose boundary is the set of 
points $f$ such that $|pf|+|ft|=2|pt|$ is completely contained inside 
$E$. Thus, since $E$ does not contain any edge, the same is true for 
$F$. It follows that the edge $(p,t)$ is added to $G$, thus preventing 
the insertion of the edge $(s,t)$ because edge $(p,t)$ is easily seen
to have smaller index than $(s,t)$.

It remains to prove that the ellipse $F$ is contained in the ellipse 
$E$. The proof considers four cases, depending on the location of $p$ 
(see Figure~\ref{fig-3-planar}, right).

\vspace{0.5em} 

\noindent 
\textbf{Case A: [$0\leq p_x \leq 1$ and $-3/4 \leq p_y \leq 0$]} We show that for any point $e$ on $E$, we have 
$|pe|+|et| > 2|pt|$. Note that we only need to check the case when 
$e$ is below the $x$-axis and either $p_y=-3/4$ or $p_x=0$. 
We consider these two cases separately.

If $p_y=-3/4$, let $a=p_x$. In this case, we have 
\[ |pe|+|et|-2|pt| = 
    \sqrt{(a-e_x)^2+(3/4+e_y)^2}+\sqrt{(e_x-1)^2+e_y^2} - 
       2\sqrt{(a-1)^2+9/16}.
\] 
Since $e_y=-\sqrt{(12-3e_x^2)}/2$, the above expression is completely 
determined by $a$ and $e_x$. Elementary algebraic transformations (verified with Maple) show that it always
evaluates to a positive value when $e_x$ varies from -2 to 2 and $a$ 
varies from 0 to 1.

If $p_x=0$, let $b=p_y$. We have 
\[ |pe|+|et|-2|pt| = 
    \sqrt{e_x^2+(b-e_y)^2}+\sqrt{(e_x-1)^2+e_y^2}-2\sqrt{1+b^2}.
\] 
As in the previous case, when $e_x$ varies from -2 to 2 and $b$ varies 
from 0 to -3/4, elementary algebraic manipulations (which we verified with Maple) show that the above expression 
evaluates to a positive number.

\vspace{0.5em} 

\noindent 
\textbf{Case B: [$0\leq p_x \leq 1$ and $p_y < -3/4$]} In this case, we show that $|pq|>|st|$. Let $a$ be 
the $x$-coordinate of $p$, let $p'$ be the point $(a,-3/4)$, and let 
$q'$ be the intersection of the line $l$ through $t$ and $p'$ with the 
ellipse $E$. Since $|pq|\geq|p'q'|$, it is sufficient to show that 
$|p'q'| > |st|=2$. The line $l$ is given by the equation 
\[ y = \frac{3(x-1)}{4(1-a)}. 
\] 
Since $E$ is given by the equation $3x^2+4y^2=12$, the intersection 
between $E$ and $l$ is given by:
\[ 4(1-a)^2x^2+3(x-1)^2-16(1-a)^2=0.
\] 
For a fixed value of $a$, let $x(a)$ be the largest root of the above 
polynomial, and let $y(a)$ be the $y$-coordinate of $l$ at $x=x(a)$. 
Then
\[ |p'q'| = \sqrt{(x(a)-a)^2+(y(a)+3/4)^2} . 
\] 
When $a$ varies from 0 to 1, this 
expression always evaluates to strictly more than 2. 

\vspace{0.5em} 

\noindent 
\textbf{Case C: [$p_x>1$ and $|pt| \leq 1/2$]} In this case, the ellipse $F$ is completely contained 
in the circle $H$ centered at $t$ whose radius is $3/2$. Since $H$ is 
contained in $E$, $F$ is also contained in $E$.

\vspace{0.5em} 

\noindent 
\textbf{Case D: [$p_x>1$ and $|pt| > 1/2$]} We separate this case into two subcases, depending on 
whether $q$ has a positive or negative $x$-coordinate. If it is positive, 
then the part of $\overline{pq}$ that is above the $x$-axis has length 
at least $3/2$ and the part of $\overline{pq}$ that is below the $x$-axis 
has length more than $1/2$, which means that $|pq|>2=|st|$. If the 
$x$-coordinate of $q$ is negative but greater than $-1$, then the same 
reasoning applies. If the $x$-coordinate of $q$ is smaller than $-1$, 
then the projection of $\overline{pq}$ on the $x$-axis is larger than 
2, which means that $|pq|>2=|st|$.
\qed 
\end{proof}

\begin{lemma} \label{lemma-3col} 
The graph $G$ computed by Algorithm~\ref{alg-off3} is 3-colorable.
\end{lemma}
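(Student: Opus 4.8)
The plan is to deduce $3$-colorability from a classical theorem, since the two preceding lemmas have already done the substantive work. By Lemma~\ref{lemma-triangle-free}, $G$ contains no triangle, and by Lemma~\ref{lemma-3-planar}, $G$ is a plane, hence planar, graph; moreover $G$ is simple, as its edges are distinct pairs of points of $P$. I would then invoke Gr\"otzsch's theorem, which asserts that every simple triangle-free planar graph admits a proper $3$-coloring. Applied to $G$, this yields the lemma and, in particular, shows that the coloring $c$ produced in the last line of Algorithm~\ref{alg-off3} is well defined.

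I do not expect a genuine obstacle here: everything substantive has been isolated into Lemmas~\ref{lemma-triangle-free} and~\ref{lemma-3-planar}, and the only point to check before applying Gr\"otzsch's theorem is that $G$ is simple, which is immediate. If one wished to avoid this comparatively deep tool, the natural alternative would be to argue directly from the geometry. For instance, if $(p,q_1)$ and $(p,q_2)$ are two edges of $G$ incident to a common vertex $p$ with $|pq_1|\le|pq_2|$, then at the moment $(p,q_2)$ was tested in step~\ref{addrule} the edge $(p,q_1)$ was already present in $G$; this forces $|pq_1|+|q_1q_2|>2|pq_2|$, hence $|q_1q_2|>|pq_2|\ge|pq_1|$, so $q_1q_2$ is strictly the longest side of the triangle $p\,q_1\,q_2$ and the angle it subtends at $p$ exceeds $\pi/3$. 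Consequently every vertex of $G$ has degree at most $5$.

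The catch, however, is that triangle-freeness, planarity, and bounded maximum degree together still do not force $3$-colorability through an elementary degeneracy or greedy argument --- triangle-free planar graphs need not be $2$-degenerate --- so this route would essentially amount to re-proving a special case of Gr\"otzsch's theorem. For this reason I would keep the short proof via Gr\"otzsch's theorem, treating Lemmas~\ref{lemma-triangle-free} and~\ref{lemma-3-planar} as the real content.
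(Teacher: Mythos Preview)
Your proposal is correct and matches the paper's proof exactly: the paper simply observes that $G$ is plane (Lemma~\ref{lemma-3-planar}) and triangle-free (Lemma~\ref{lemma-triangle-free}) and then cites Gr\"otzsch's theorem to conclude $3$-colorability. Your additional discussion of the degree bound is a nice observation but, as you yourself note, is not needed and does not appear in the paper.
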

\begin{proof}
By Lemmas~\ref{lemma-triangle-free} and Lemma~\ref{lemma-3-planar},
$G$ is plane and triangle-free. It is known that such a graph is
3-colorable; see~\cite{g3color},\cite{thomassen03}.
\qed 
\end{proof}

\begin{lemma}\label{prop-ub3off} 
For any point set $P$, the 3-coloring of $P$ computed by 
Algorithm~\ref{alg-off3} has stretch factor at most 2. Thus, we 
have $t(3)\leq 2$.
\end{lemma}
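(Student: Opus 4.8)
The plan is to show that the coloring $c$ produced by Algorithm~\ref{alg-off3} satisfies the $2$-ellipse property; by Proposition~\ref{prop-komplete} (and the discussion following it) this immediately gives that the complete $3$-partite graph induced by $c$ is a $2$-spanner, hence $t(3)\le 2$. So fix two distinct points $p,q\in P$ with $c(p)=c(q)$. We must produce a point $r\in P$ with $c(r)\ne c(p)$ and $|pr|+|rq|\le 2|pq|$.

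First I would observe that since $c$ is a proper coloring of $G$ and $c(p)=c(q)$, the edge $(p,q)$ is \emph{not} in $G$. Now look at the moment the pair $(p,q)$ was processed in the loop of line~3, say $e_i=(p,q)$. Since the edge was not added, the test in line~\ref{addrule} must have failed: there was already an edge $(p',q')$ in $G$ at that time with $|pp'|+|p'q|\le 2|pq|$ and $|pq'|+|q'q|\le 2|pq|$. The key point is that this witness edge $(p',q')$ is a genuine edge of the final graph $G$, so its two endpoints receive distinct colors under $c$. Consequently at least one of $p'$, $q'$ has a color different from $c(p)$; call that endpoint $r$. By the inequality recorded for $r$ (either $|pp'|+|p'q|\le 2|pq|$ or $|pq'|+|q'q|\le 2|pq|$), we get $|pr|+|rq|\le 2|pq|$, and $c(r)\ne c(p)$, which is exactly the witness the $2$-ellipse property demands.

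One subtlety to handle carefully is whether $r$ could coincide with $p$ or $q$. If, say, $p'=p$, then the first inequality reads $|pq|=|pp'|+\text{(something)}$... more precisely $|pp'|+|p'q|=|pq|\le 2|pq|$ holds trivially, but then the witness we actually use is $q'$: since $(p,q')=(p',q')$ is an edge, $c(q')\ne c(p)$, and the second inequality $|pq'|+|q'q|\le 2|pq|$ gives what we need, so we take $r=q'$; and $q'\ne q$ because otherwise $(p',q')=(p,q)$ would be an edge of $G$, contradicting $c(p)=c(q)$. The symmetric situation $q'=q$ is handled the same way with $r=p'$. In the remaining case both $p'\notin\{p,q\}$ and $q'\notin\{p,q\}$, and we simply pick whichever of $p',q'$ has color $\ne c(p)$; at least one does because $(p',q')$ is a proper edge.

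I do not expect a real obstacle here: the hard geometric content (that $G$ is plane and triangle-free, hence $3$-colorable) has already been established in Lemmas~\ref{lemma-triangle-free}--\ref{lemma-3col}, and this lemma only needs to exploit the definition of the insertion rule in line~\ref{addrule} together with the fact that edges of $G$ join differently colored vertices. The only thing requiring a moment's attention is the degenerate-coincidence bookkeeping in the previous paragraph, and that is routine.
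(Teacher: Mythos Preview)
Your proof is correct and follows essentially the same line as the paper's: both argue that since $(p,q)\notin G$, the insertion test in line~\ref{addrule} must have failed because of some edge $(s,t)$ (your $(p',q')$) already in $G$ with both endpoints lying in the $2$-ellipse of $\{p,q\}$, and then use $c(s)\ne c(t)$ to select the witness $r$. The paper's version is terser and omits the degenerate-coincidence bookkeeping you spell out, but the arguments are identical.
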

\begin{proof}
It is sufficient to show that the 3-coloring $c$ produced by
Algorithm~\ref{alg-off3} has the 2-ellipse property. Let $p$ and $q$ be 
points in $P$ such that $c(p)=c(q)$. Let $E$ be the ellipse
whose boundary is the set of points $e$ such that $|pe|+|eq|=2|pq|$.
Since $(p,q)$ is not an edge in $G$, $G$ must contain an edge $(s,t)$
whose two endpoints are inside $E$. Since $c(s)\neq c(t)$, at least
one of $s$ and $t$ has a different color than $p$ and $q$.
Without loss of generality, $s$ is that point. Since $s$ is inside
$E$, we have that $|ps|+|sq|\leq 2|pq|$.
\qed 
\end{proof}

\begin{figure}[h]
\centering\includegraphics{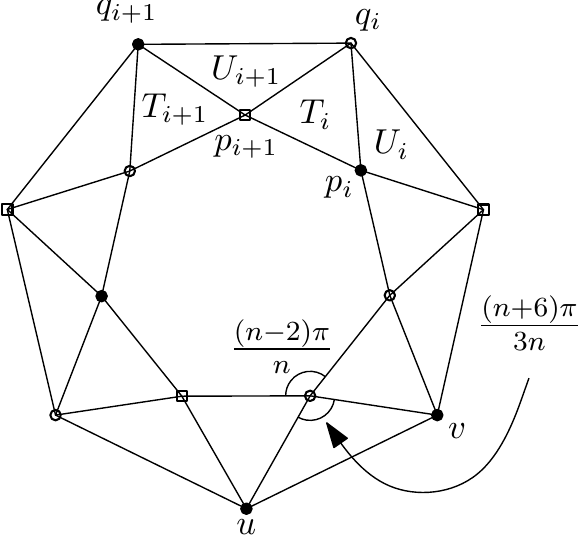}\caption{Lower bound of
$2-\epsilon$ for $k=3$.}\label{fig-lb3off}
\end{figure}

\begin{lemma} 
For every $\epsilon>0$, there exists a point set $P$ such that every
$3$-coloring of $P$ has stretch factor at least $2-\epsilon$. 
Thus, we have $t(3)\geq 2$.
\end{lemma}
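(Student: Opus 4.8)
The plan is to use the reformulation from Propositions~\ref{helloworld} and~\ref{prop-komplete}: it suffices to exhibit, for every $\epsilon>0$, a finite point set $P$ for which no $3$-coloring satisfies the $(2-\epsilon)$-ellipse property, i.e.\ such that in every $3$-coloring $c$ of $P$ there are distinct points $p,q$ with $c(p)=c(q)$ but no point $x\in P$ with $c(x)\neq c(p)$ satisfying $|px|+|xq|\leq(2-\epsilon)|pq|$. The elementary gadget I would use is an equilateral triangle: if $\{p,q,r\}$ span an equilateral triangle and no point of $P$ other than $r$ lies near the segment $pq$, then a monochromatic pair $\{p,q\}$ is \emph{expensive}, because the only nearby candidate witness $r$ lies on the $2$-ellipse of $\{p,q\}$ (we have $|pr|+|rq|=2|pq|$) and hence just outside its $(2-\epsilon)$-ellipse, while every other point gives a detour of length at least $(2-\epsilon)|pq|$; so whatever color $r$ receives, the stretch factor is at least $2-\epsilon$. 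This is the exact analogue of an edge of the regular $n$-gon in the $k=2$ proof, where the cheapest detour around a monochromatic edge has length $1+2\cos(\pi/n)\to 3$.

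The construction would then glue finitely many such (nearly) equilateral triangles into a point set $P$ whose \emph{conflict graph} $H$ — the graph on $P$ having an edge for every expensive pair, i.e.\ every pair with no point of $P$ strictly inside its $(2-\epsilon)$-ellipse — fails to be $3$-colorable. (This is the role played by an odd cycle for $k=2$.) Given such a $P$, any $3$-coloring is improper on $H$, so some edge $\{p,q\}$ of $H$ is monochromatic; combined with the gadget property, and with a cascading argument (again in the spirit of the $k=2$ proof) showing that every point lying inside the $(2-\epsilon)$-ellipse of $\{p,q\}$ is forced by the neighboring triangles to share $c(p)$ and hence cannot serve as a witness, the $(2-\epsilon)$-ellipse property fails. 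Since $\epsilon>0$ is arbitrary, this gives $t(3)\geq 2$.

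The crux, and the main obstacle, is the second step: arranging that $H$ is not $3$-colorable while \emph{every} edge of $H$ stays expensive. Expensive pairs are scarce and highly constrained — a fat ellipse around $pq$ must be empty, so $H$ is a subgraph of the Gabriel graph of $P$ (hence planar), $H$ is $K_4$-free, any point close to the centroid of a triangle destroys the expensiveness of that triangle's edges, and edges of a connected part of $H$ are forced to be nearly equilateral of a common size. Consequently the obvious $4$-chromatic candidates — a $K_4$, or a wheel with a central hub — do not work, and one must use a more spread-out $4$-chromatic arrangement of near-equilateral triangles and then verify a finite list of distance inequalities guaranteeing both that no stray vertex enters the $(2-\epsilon)$-ellipse of any conflict edge and that the forcing cascade closes up. It is precisely because these inequalities can only be met with the triangles \emph{nearly}, rather than exactly, equilateral that the construction yields $2-\epsilon$ instead of $2$.
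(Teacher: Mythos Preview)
Your outline is on the right track---build a point set from (near-)equilateral triangles so that the induced ``conflict graph'' of expensive pairs is not $3$-colorable---but you stop short of actually giving the construction. You explicitly flag the second step as ``the crux, and the main obstacle'' and then discuss why naive candidates ($K_4$, wheels) fail, without producing one that works. That is the gap: as written, this is a plan, not a proof.

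The paper's construction is exactly the kind of ``spread-out $4$-chromatic arrangement'' you are looking for, and it is simpler than your discussion suggests. Take the vertices $p_1,\dots,p_n$ of a regular $n$-gon and attach an outward equilateral triangle $T_i=(q_i,p_i,p_{i+1})$ to each edge. The $2n$ triangles $T_i$ and $U_i=(q_{i-1},p_i,q_i)$ tile an annular strip; their union of edges is precisely the antiprism graph on $2n$ vertices. For suitable $n$ this graph is not $3$-colorable (the rainbow condition on consecutive edge-sharing triangles forces a periodic pattern around the strip that fails to close up), so every $3$-coloring of $P$ leaves some $T_i$ or $U_i$ with a monochromatic pair. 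A direct distance computation---no cascading argument is needed---then shows that the cheapest third point gives a detour of ratio exactly $2$ for an edge of a $T_i$ and $1/\sin\!\bigl((n+6)\pi/6n\bigr)\to 2$ for an edge of a $U_i$. Your worries about planarity, $K_4$-freeness, and a common edge length are all met automatically: the antiprism is planar and $K_4$-free, the $T_i$ are exactly equilateral and congruent, and the $U_i$ become arbitrarily close to equilateral as $n\to\infty$, which is precisely where the $\epsilon$ in $2-\epsilon$ comes from.
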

\begin{proof} 
Let $n$ be an odd integer, and let 
$P=\{p_1,\ldots,p_n,q_1,\ldots,q_n\}$ where the $p_i$'s are the vertices 
of a regular $n$-gon given in counter-clockwise order, and the $q_i$'s 
are such that the triangles $T_i=(q_i,p_i,p_{i+1})$ are equilateral with interior lying outside the $n$-gon
(indices are taken modulo $n$); see Figure~\ref{fig-lb3off}.
Now consider the set of triangles 
$\mathcal{T}=\{T_1,\ldots,T_n,U_1,\ldots,U_n\}$, where
$U_i=(q_{i-1},p_i,q_i)$. A simple parity argument shows
that, for any 3-coloring of $P$, there is at least one triangle of
$\mathcal{T}$ that has two vertices $u$ and $v$ that are assigned
the same color. If this triangle is a $T_i$, then the stretch factor
between $u$ and $v$ is at least 2. If this triangle is a $U_i$, then
the stretch factor between $u$ and $v$ is at least
$ 1/\sin((n+6)\pi/6n)$, which tends to 2 as $n$ goes to infinity.
\qed 
\end{proof}

%

Next, we consider the case when $k=4$. For this case, we simply use the 
Delaunay triangulation to find a 4-coloring. We then show that this 
coloring satisfies the $\sqrt{2}$-ellipse property.

\begin{algorithm}
\caption{Offline 4 Colors}\label{alg-off4}
\begin{algorithmic}[1]
\REQUIRE $P$, a set of points in the plane 
\ENSURE $c$, a 4-coloring of $P$ 
\STATE Compute the Delaunay triangulation $D$ of $P$ 
\STATE $c\leftarrow$ a 4-coloring of $D$
\end{algorithmic}
\end{algorithm}

\begin{figure}
\centering\includegraphics{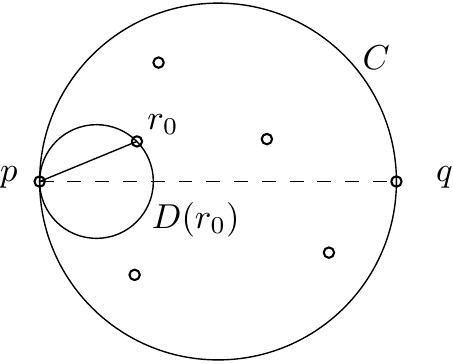}\caption{Upper bound of $\sqrt{2}$ for
$k=4$.}\label{fig-ub4off}
\end{figure}

\begin{lemma}    \label{prop-ub4off} 
For any point set $P$, the 4-coloring of $P$ computed by 
Algorithm~\ref{alg-off4} has stretch factor at most $\sqrt{2}$. 
Thus, we have $t(4)\leq \sqrt{2}$.
\end{lemma}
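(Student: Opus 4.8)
The plan is to mimic exactly what was done for $k=2,3$: produce a proper $4$-colouring of a planar graph on $P$, then verify that this colouring has the $\sqrt2$-ellipse property, after which Proposition~\ref{prop-komplete} yields $t(4)\le\sqrt2$. Since the Delaunay triangulation $D$ of $P$ is planar, the Four Colour Theorem guarantees that Algorithm~\ref{alg-off4} always succeeds, so the output $c$ is a proper $4$-colouring of $D$. Everything then reduces to showing that $c$ satisfies the $\sqrt2$-ellipse property.

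So fix $p\neq q$ in $P$ with $c(p)=c(q)$. Because $c$ properly colours $D$, the pair $(p,q)$ is not a Delaunay edge. Write $m$ for the midpoint of $pq$, and let $B$ be the closed disk with diameter $pq$ (centre $m$, radius $|pq|/2$); this is the smallest enclosing disk of $\{p,q\}$. For every $x\in B\setminus\{p,q\}$ the angle $\angle pxq$ is at least $\pi/2$, so $|px|^2+|xq|^2\le|pq|^2$ and hence $|px|+|xq|\le\sqrt{2(|px|^2+|xq|^2)}\le\sqrt2\,|pq|$, i.e.\ $B$ is contained in the $\sqrt2$-ellipse of $p$ and $q$. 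Therefore it is enough to exhibit a point $w\in P\cap B$ with $c(w)\neq c(p)$.

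The crux is to locate a Delaunay \emph{edge} with both endpoints in $B$, which I would do through the Voronoi diagram. Since $(p,q)$ is not a Delaunay edge, the circle $\partial B$ is not empty of points of $P$ in its interior, so $|mx|<|pq|/2$ for some $x\in P$; consequently $m$ lies in neither the Voronoi cell of $p$ nor that of $q$. Hence $m$ lies in the Voronoi cell $V(r)$ of some site $r\in P\setminus\{p,q\}$, and $|mr|\le|mp|=|pq|/2$, so $r\in B$. Now walk from $m$ towards $p$ along the segment $pm$: since $p$ lies (strictly) in its own Voronoi cell and $r\neq p$, this segment must leave $V(r)$ at some point $z$ strictly between $m$ and $p$, lying on the common boundary of $V(r)$ and a neighbouring cell $V(r')$; then $(r,r')$ is a Delaunay edge. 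Moreover $|zr'|=|zr|\le|zp|$ (as $z\in V(r)$), so $|mr'|\le|mz|+|zr'|\le|mz|+|zp|=|mp|=|pq|/2$, whence $r'\in B$ as well. As $(r,r')$ is an edge of $D$ we have $c(r)\neq c(r')$, so at least one of $r,r'$ has colour different from $c(p)$; taking that one as $w$ finishes the argument, by the previous paragraph.

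The only genuine obstacle is this middle step, and the key idea is to look at the Voronoi cell \emph{containing the midpoint $m$}, rather than trying to use a Delaunay neighbour of $p$ or $q$ directly; the triangle inequality along the segment $pm$ is then exactly what keeps both endpoints of the resulting Delaunay edge inside $B$. Degenerate configurations (the midpoint lying on a Voronoi edge or vertex, or ties among distances) only make the relevant cell or edge non-unique, and can be dispatched by a standard general-position perturbation or by noting that in those cases $m$ already lies on a Voronoi edge between two sites of $B$, giving the needed Delaunay edge outright.
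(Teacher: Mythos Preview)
Your argument is correct, and it reaches the same conclusion as the paper's proof, but by a genuinely different route.

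The paper's proof also starts from the disk $B$ with diameter $pq$, but instead of invoking the Voronoi diagram it works directly with a one-parameter family of circles through $p$ with centre on the line $pq$. Among all points of $P$ inside $B$, it picks the point $r_0$ for which this circle has minimum diameter; that circle is then empty, so $(p,r_0)$ is itself a Delaunay edge, giving immediately $c(r_0)\neq c(p)$ with $r_0\in B$. In other words, the paper produces a Delaunay \emph{neighbour of $p$} inside $B$, whereas you produce a Delaunay \emph{edge} inside $B$ whose endpoints need not include $p$ or $q$, and then use the pigeonhole step $c(r)\neq c(r')$ to extract a witness. The paper's shrinking-circle argument is slightly more direct (no case analysis, no walk, no final pigeonhole), while your Voronoi-walk argument is pleasantly self-contained and makes the role of the midpoint explicit; the triangle-inequality bound $|mr'|\le|mz|+|zp|=|mp|$ along the segment is the clean substitute for the nestedness of the paper's circle family. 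Note also that your argument transparently covers the case $r'=p$ (then $(r,p)$ is the Delaunay edge and $w=r$), which is worth stating once.
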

\begin{proof}
It is sufficient to show that the coloring $c$ computed by
Algorithm~\ref{alg-off4} has the $\sqrt{2}$-ellipse property. 
Let $p$ and $q$ be points of $P$ such that $c(p)=c(q)$. Since $(p,q)$
is not an edge in the Delaunay triangulation, the circle $C$ whose 
diameter is $pq$ contains at least one point of $P$. For a point $r$ 
inside $C$, let $D(r)$ be the circle through $p$ and $r$ whose center 
is on $pq$ (see Figure~\ref{fig-ub4off}). Let $r_0$ be the point inside 
$C$ such that $D(r_0)$ has minimum diameter. Then, $D(r_0)$ is an empty
circle with $p$ and $r_0$ on its boundary, which means that $(p,r_0)$
is a Delaunay edge. Therefore, $c(r_0)\neq c(p)$, and since $r_0$ is
inside $C$, we have $|pr_0|+|r_0q|\leq \sqrt{2}|pq|$.
\qed 
\end{proof}

\begin{lemma} 
For every $\epsilon>0$, there exists a point set $P$ such that every
$4$-coloring of $P$ has stretch factor at least $\sqrt{2}-\epsilon$.
Thus, we have $t(4)\geq \sqrt{2}$.
\end{lemma}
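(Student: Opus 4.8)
## Proof Proposal

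The plan is to exhibit a small, highly symmetric point set for which every $4$-coloring is forced to place two equally-colored points whose ellipse is ``tight.'' By analogy with the lower-bound constructions for $k=2$ and $k=3$, the natural candidate is the vertex set of a regular polygon, possibly decorated with extra points, chosen so that a pigeonhole/parity argument guarantees a monochromatic pair at a prescribed distance while all potential ``helper'' points $r$ lie outside the corresponding $\sqrt2$-ellipse. Concretely, I would take $P$ to be the vertex set of a regular $n$-gon for suitable $n$ (I expect $n$ related to a multiple that makes a parity argument work once five colors would be needed to avoid the bad configuration — recall $t(4)\ge 1/\cos(\pi/5)=\sqrt2$ is exactly the claim specialized from the general bound $t(k)\ge 1/\cos\frac{\pi}{k+1}$ at $k=4$). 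So the target construction is the regular pentagon, or an $n$-gon with $n$ a multiple of $5$, scaled up so that the limiting stretch factor is $1/\cos(\pi/5)=\sqrt2$.

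First I would fix the regular pentagon $P=\{p_1,\dots,p_5\}$ (indices mod $5$) and observe that any $4$-coloring of five points must, by pigeonhole, assign the same color to two of them; I would then argue that among the $\binom52=10$ pairs, the two monochromatic vertices are either adjacent on the pentagon or separated by one vertex, and in either case compute the best possible helper. Second, for a monochromatic pair $p_i,p_j$ I would identify which other vertices $r$ of the pentagon satisfy $|p_i r|+|r p_j|\le t|p_i p_j|$ and show that forcing such an $r$ to have a third color, together with the constraints from the other monochromatic pairs that the pigeonhole argument can be made to produce, drives $t$ up to $1/\cos(\pi/5)$. If the bare pentagon does not by itself force a bad pair for \emph{every} $4$-coloring (five points and four colors only force \emph{one} repeated color), I would instead use $P=$ the vertex set of a regular $n$-gon with $n$ an odd multiple of $5$ and run a parity argument on a family of triangles/chords analogous to the $k=3$ proof: among a cyclic family of five-point ``blocks,'' some block must reuse a color in a way that makes its diameter-pair's ellipse empty of a differently colored point, yielding stretch $\ge 1/\cos(\pi/5)-o(1)$ as $n\to\infty$.

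The key computation will be the trigonometric identity pinning down the stretch factor: for a regular $n$-gon of unit circumradius, the chord between vertices $k$ apart has length $2\sin(k\pi/n)$, and I will need to check that the relevant ratio $|p_ip_j|/(|p_ir|$-type quantity$)$ — equivalently the quantity $(|p_ir|+|rp_j|)/|p_ip_j|$ minimized over admissible $r$ — equals exactly $1/\cos(\pi/(n+1))$ for the extremal pair, and that this tends to $1/\cos(\pi/5)=\sqrt2$ when $n$ is the right multiple of $5$. I would also invoke Theorem~\ref{thm-wspd} only implicitly: the lower bound is about colorings/complete $k$-partite graphs, so by Propositions~\ref{helloworld} and~\ref{prop-komplete} it suffices to show every $4$-coloring of $P$ violates the $(\sqrt2-\epsilon)$-ellipse property, which is exactly what the construction delivers.

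The main obstacle I anticipate is the combinatorial bookkeeping of \emph{which} monochromatic pair the pigeonhole/parity argument can be forced to produce: with four colors on five (or $n$) points, a single repeated color is guaranteed, but an adversarial coloring will place that repetition on the pair with the \emph{largest} ellipse (most room for a helper), so I must show that even the most favorable monochromatic pair still has stretch $\ge 1/\cos(\pi/5)-\epsilon$, or else enlarge $P$ (extra points inside the polygon, or a denser $n$-gon) so that \emph{several} repetitions are forced and at least one lands on a pair with an empty $\sqrt2$-ellipse. Getting this parity argument airtight — mirroring the $T_i$/$U_i$ triangle trick from the $k=3$ lower bound but now with four colors — is where the real work lies; the geometry and the trig identity are then routine.
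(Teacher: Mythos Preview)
Your proposal rests on a false identity: $1/\cos(\pi/5)\neq\sqrt{2}$. Since $\cos(\pi/5)=(1+\sqrt5)/4$, one has $1/\cos(\pi/5)=\sqrt5-1\approx 1.236$, well below $\sqrt2\approx 1.414$. Hence the regular pentagon---or any refinement of the general $(k+1)$-gon construction from Lemma~\ref{prop-lbkoff}---can only yield the weaker bound $t(4)\ge 1/\cos(\pi/5)$, never $\sqrt2-\epsilon$. The trigonometric ``key computation'' you outline simply does not produce the number you need, so the entire plan collapses at that point regardless of how carefully the pigeonhole bookkeeping is done.

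The paper's construction is genuinely different and is built around the \emph{square}, not the pentagon. Take $n$ odd and let $P=\{p_1,\dots,p_n,q_1,\dots,q_n\}$, where the $p_i$ are the vertices of a regular $n$-gon, the $q_i$ are the vertices of a larger concentric regular $n$-gon, and $|q_ip_i|=|p_ip_{i+1}|$ for all $i$. The quadrilaterals $Q_i=(p_i,p_{i+1},q_{i+1},q_i)$ are near-squares sharing consecutive edges. If every $Q_i$ carried four distinct colors, then the unordered pair $\{c(p_i),c(q_i)\}$ would be forced to equal the complement (in $\{1,2,3,4\}$) of $\{c(p_{i+1}),c(q_{i+1})\}$ for every $i$; iterating around an odd cycle gives a contradiction. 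Hence some $Q_i$ contains two equally-colored vertices $u,v$, and the best helper inside that near-square gives stretch at least $1/\sin\!\bigl((n+2)\pi/(4n)\bigr)$, which tends to $1/\sin(\pi/4)=\sqrt2$ as $n\to\infty$.

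The idea you were missing is that the extremal local object for four colors is a unit square (two opposite corners at distance $\sqrt2$ with the only helpers at distance $1+1$), and that to defeat the obvious ``use all four colors on one square'' coloring one chains many overlapping near-squares around an odd cycle so that parity forces a color repeat in at least one of them.
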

\begin{proof} 
Let $n$ be an odd integer, and let 
$P=\{p_1,\ldots,p_n,q_1,\ldots,q_n\}$, where the $p_i$'s are the 
vertices of a regular $n$-gon, the $q_i$'s are the vertices of a larger 
regular $n$-gon with the same center, and $|q_ip_i|=|p_ip_{i+1}|$ for 
all $i$; refer to Figure~\ref{fig-lb4off}.
Let $Q_i$ be the quadrilateral $(p_i,p_{i+1},q_{i+1},q_i)$. A simple 
parity argument shows that for any 4-coloring of $P$, there is at 
least one $Q_i$ that has two vertices $u$ and $v$ that are assigned the 
same color. The stretch factor between $u$ and $v$ is then at least
$1/ \sin((n+2)\pi/4n)$, which tends to $\sqrt{2}$ when $n$ goes to
infinity.
\qed 
\end{proof}

\begin{figure}
\centering\includegraphics{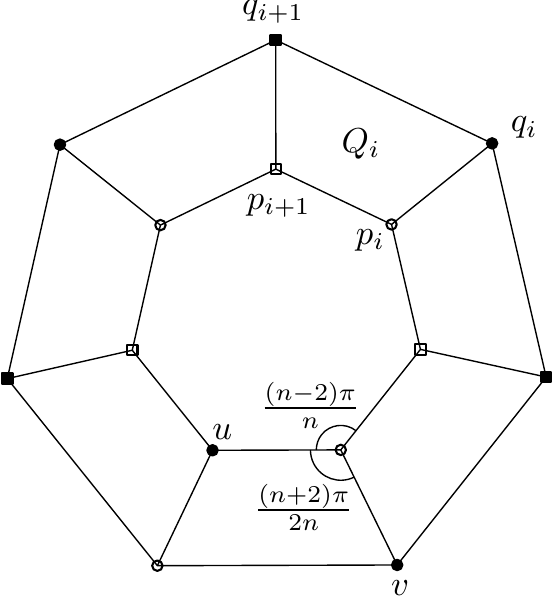}\caption{Lower bound of
$\sqrt{2}-\epsilon$ for $k=4$.}\label{fig-lb4off}
\end{figure}

Our general algorithm for values $k > 4$ uses ideas from the ordered 
$\Theta$-graph of Bose \emph{et al.}~\cite{bose04a}. 
We take advantage of the fact that we are in an off-line context, so that 
we can sort the points according to their $y$-coordinates. We process 
the points one by one from the lowest to the highest, splitting the 
half-plane below the current point $p$ being processed into $k-1$ cones 
of angle $\pi / (k-1)$ and having their apex at $p$. 
For each such cone $c_j$, we take the point $r_j$ in $c_j$ that is 
closest to $p$. Then we assign $p$ a color that has not been assigned to 
any of the $r_j$'s. The fact that this algorithm uses at most $k$ colors 
is straightforward, since there are at most $k-1$ such $r_j$. 

\begin{algorithm}
\caption{Offline $k$ Colors}\label{alg-offk}
\begin{algorithmic}[1]
\REQUIRE $P$, a set of points in the plane 
\ENSURE $c$, a $k$-coloring of $P$ 
\STATE Let $p_1,\ldots,p_n$ be the points of $P$ sorted in non-decreasing 
order of $y$-coordinates 
\FOR{$i=1$ to $n$}
\STATE partition the half-plane below $p_i$ into $k-1$ cones of
angle $\theta=\pi/(k-1)$ and apex $p_i$ 
\STATE for each cone $c_j$, let $r_j$ be the point in $c_j$ that is 
closest to $p_i$ 
\STATE $c(p_i)\leftarrow \min\{l>0:\forall r_j,c(r_j)\neq l\}$ 
\ENDFOR
\end{algorithmic}
\end{algorithm}

\begin{lemma} \label{prop-ubkoff}
For $k>4$, we have $t(k)\leq 1+2\sin(\pi/(2k-2))$.
\end{lemma}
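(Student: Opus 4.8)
The plan is to show that the $k$-coloring $c$ produced by Algorithm~\ref{alg-offk} satisfies the $(1+2\sin(\pi/(2k-2)))$-ellipse property; by Propositions~\ref{helloworld} and~\ref{prop-komplete} this yields the claimed bound on $t(k)$. So let $p$ and $q$ be two distinct points with $c(p)=c(q)$, and assume without loss of generality that $q$ was processed before $p$ (i.e. $q$ has $y$-coordinate at most that of $p$; ties can be broken consistently with the sorting order). Then $q$ lies in the half-plane below $p$, hence in one of the $k-1$ cones $c_j$ with apex $p$. Let $r=r_j$ be the point of that cone closest to $p$ that was selected when $p$ was processed. Since $c(p)=c(q)$ and $c$ assigns $p$ a color distinct from all the $r_j$'s, we have $r\neq q$ and $c(r)\neq c(p)$. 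It remains to bound $|pr|+|rq|$ in terms of $|pq|$.

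\textbf{The geometric estimate.} Both $q$ and $r$ lie in the same cone of angle $\theta=\pi/(k-1)$ with apex $p$, so the angle $\angle rpq$ is at most $\theta$. Moreover, by the choice of $r$ as the closest point to $p$ in that cone, $|pr|\le|pq|$. I would now bound $|rq|$ using the triangle with vertices $p,r,q$: write $|rq|^2=|pr|^2+|pq|^2-2|pr||pq|\cos(\angle rpq)$ and maximize the resulting expression for $|pr|+|rq|$ over $|pr|\in[0,|pq|]$ and $\angle rpq\in[0,\theta]$. After normalizing $|pq|=1$, this is a one-parameter optimization once one observes the worst case has $\angle rpq=\theta$; the maximum of $|pr|+\sqrt{|pr|^2+1-2|pr|\cos\theta}$ over $|pr|\in[0,1]$ is attained at $|pr|=1$, giving $|pr|+|rq|\le 1+\sqrt{2-2\cos\theta}=1+2\sin(\theta/2)=1+2\sin(\pi/(2k-2))$. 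Hence $|pr|+|rq|\le(1+2\sin(\pi/(2k-2)))|pq|$, as required.

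\textbf{Main obstacle.} The only real content is the optimization step, i.e. verifying that the supremum of $f(\rho,\alpha)=\rho+\sqrt{\rho^2+1-2\rho\cos\alpha}$ over $\rho\in[0,1]$, $\alpha\in[0,\theta]$ is $1+2\sin(\theta/2)$. Monotonicity in $\alpha$ is immediate since $\partial f/\partial\alpha=\rho\sin\alpha/|rq|\ge 0$, so $\alpha=\theta$ is worst. For fixed $\alpha=\theta$, one checks $f(\rho,\theta)$ is increasing in $\rho$ on $[0,1]$ — this is where a short derivative computation is needed; since $\theta=\pi/(k-1)\le\pi/4$ for $k\ge5$, we have $\cos\theta\ge\tfrac{1}{\sqrt2}>\tfrac12$, which makes the derivative manifestly nonnegative on the whole interval. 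I would present this as a one-line calculus remark rather than grinding through it. Note also that the hypothesis $k>4$ is exactly what keeps $\theta$ small enough for this monotonicity and, incidentally, guarantees at least four cones so the coloring argument for $k\ge 2$ is already covered by the earlier special cases. One subtlety worth a sentence: if some cone $c_j$ contains no point of $P$ other than those below $p$, there is simply no $r_j$ for that cone, and the color-selection rule still leaves $p$ with at most $k-1$ forbidden colors, so a color in $\{1,\dots,k\}$ is always available; this is the "straightforward" claim already asserted in the text preceding the algorithm.
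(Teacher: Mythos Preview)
Your argument is correct and reaches the same bound, but the paper's proof avoids the optimization step entirely by a short geometric trick. Instead of invoking the law of cosines and then maximizing $f(\rho,\alpha)=\rho+\sqrt{\rho^2+1-2\rho\cos\alpha}$, the paper introduces the point $r'$ on the ray from $p$ through $r$ at distance $|pq|$ from $p$ (this lies beyond $r$ since $|pr|\le|pq|$), and then simply writes
\[
|pr|+|rq|\;\le\;|pr|+|rr'|+|r'q|\;=\;|pr'|+|r'q|\;=\;|pq|+2\sin\tfrac{\alpha}{2}\,|pq|,
\]
the last equality because $r'$ and $q$ both lie on the circle of radius $|pq|$ about $p$ and subtend angle $\alpha=\angle rpq$ there. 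This one-line triangle-inequality argument replaces your calculus computation and, incidentally, works for every $\theta\in(0,\pi)$, so the restriction $k>4$ plays no role in the estimate (nor does it in yours: the monotonicity of $f(\cdot,\theta)$ on $[0,1]$ actually holds for all $\theta$, since squaring the inequality $\sqrt{\rho^2+1-2\rho\cos\theta}\ge\cos\theta-\rho$ reduces to $1\ge\cos^2\theta$). Your approach has the virtue of being completely mechanical and making the extremal configuration $|pr|=|pq|$, $\angle rpq=\theta$ explicit; the paper's is shorter and more geometric.
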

\begin{proof}
Let $p$ and $q$ be points of $P$ such that $c(p)=c(q)$. We may assume 
without loss of generality that $q_y\leq p_y$. Let $c$ be 
the cone with apex at $p$ that contains $q$ in line~4 of 
Algorithm~\ref{alg-offk}, let $r$ be the nearest neighbor of $p$ in $c$, 
let $r'$ be the intersection between the ray emanating from $p$ through $r$ 
and the circle centered at $p$ with radius $|pq|$,
and let $\alpha=\angle rpq$ (see Figure~\ref{fig-ubkoff}). Then,
\[ |pr| + |rq| \leq |pr| + |rr'| + |r'q| = |pq| + |r'q| =  
         |pq| + 2\sin\frac{\alpha}{2}|pq|
     \leq (1+2\sin\frac{\pi}{2(k-1)})|pq| . 
\] 
It follows that the coloring computed by Algorithm~\ref{alg-offk} 
has the $(1+2\sin(\pi/(2k-2))$-ellipse property. The result follows 
from the fact that $c(r)\neq c(p)$ and that Algorithm~\ref{alg-offk} 
uses at most $k$ colors.
\qed 
\end{proof}

\begin{figure}
\centering\includegraphics{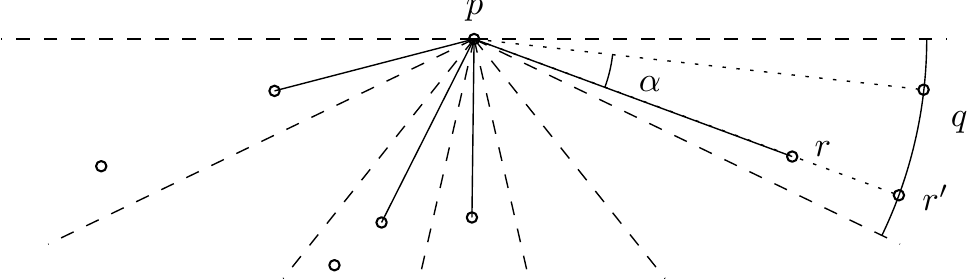}
\caption{Upper bound of $1+2\sin(\pi/(2k-2))$ for $k>4$.}
\label{fig-ubkoff}
\end{figure}

\begin{figure}
\centering\includegraphics{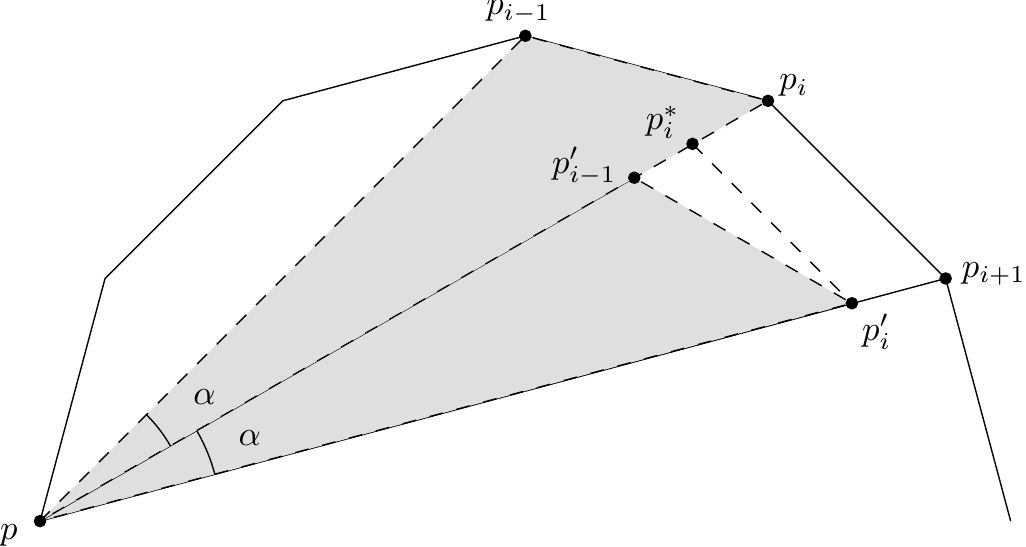}
\caption{Illustration of the proof of Lemma~\ref{lemma-lowerk}.}
\label{fig-lbkoff}
\end{figure}

\begin{lemma}\label{lemma-lowerk} Let $p,q,r$ be three distinct vertices of a regular $(k+1)$-gon. 
Then the ratio $(|pr|+|rq|)/|pq|$ is at least $1/ \cos(\frac{\pi}{k+1})$ and this value is achieved  
when $p$, $r$, and $q$ are consecutive vertices. 
\end{lemma}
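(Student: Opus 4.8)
The plan is to put the regular $(k+1)$-gon on the unit circle and reduce the claim to an elementary inequality in two integer parameters. Write $\delta=\pi/(k+1)$, so that the chord joining two vertices that are $m$ steps apart has length $2\sin(m\delta)$; this is legitimate for every separation $1\le m\le k$, since then $0<m\delta<\pi$ and $\sin(m\delta)=\sin((k+1-m)\delta)$ removes the ambiguity in counting steps.

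Given three distinct vertices $p,r,q$, I would let $a$ and $b$ be the numbers of steps in the two sub-arcs into which $r$ splits the $p$--$q$ arc that contains $r$. Then $a,b\ge 1$, the complementary arc from $q$ to $p$ has $k+1-a-b\ge 1$ steps (hence $a+b\le k$), and $|pr|=2\sin(a\delta)$, $|rq|=2\sin(b\delta)$, $|pq|=2\sin((a+b)\delta)$. Every triple arises this way, and the ratio to be bounded depends only on $a,b$. Applying the sum-to-product identity to the numerator and the denominator gives
$$\frac{|pr|+|rq|}{|pq|}=\frac{\sin(a\delta)+\sin(b\delta)}{\sin((a+b)\delta)}=\frac{\cos\big(\tfrac{(a-b)\delta}{2}\big)}{\cos\big(\tfrac{(a+b)\delta}{2}\big)}.$$

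Next I would bound this expression. Because $a,b\ge 1$ we have $|a-b|\le(a+b)-2$, and because $a+b\le k$ we have $0\le\tfrac{(a+b)\delta}{2}<\tfrac{\pi}{2}$, so both cosines are positive and $\cos$ is decreasing on the range in play; hence the ratio is at least $g(s):=\cos\big(\tfrac{(s-2)\delta}{2}\big)/\cos\big(\tfrac{s\delta}{2}\big)$ with $s=a+b\in\{2,\dots,k\}$. One then checks that $g$ is increasing on that range, since $\tfrac{d}{ds}\ln g(s)=\tfrac{\delta}{2}\big(\tan(\tfrac{s\delta}{2})-\tan(\tfrac{(s-2)\delta}{2})\big)>0$ on $[0,\tfrac{\pi}{2})$. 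Consequently the ratio is at least $g(2)=1/\cos\delta=1/\cos\frac{\pi}{k+1}$. For the equality clause: equality forces $s=2$ (strict monotonicity of $g$) and $|a-b|=(a+b)-2=0$, i.e. $a=b=1$, which is exactly the case where $p,r,q$ are consecutive; conversely, for consecutive vertices the ratio equals $\cos 0/\cos\delta=1/\cos\frac{\pi}{k+1}$.

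The only delicate points are bookkeeping: verifying the chord formula and the bound $0\le\tfrac{(a+b)\delta}{2}<\tfrac{\pi}{2}$ that makes all the cosine/tangent monotonicity steps valid, and noting that $a$ and $b$ range over all positive integers with $a+b\le k$, the parity relation between $a+b$ and $a-b$ being irrelevant since only $|a-b|\le(a+b)-2$ is used. The sum-to-product rewriting is the one idea that makes the rest routine, so I expect the write-up to be short.
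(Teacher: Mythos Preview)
Your argument is correct and self-contained. It differs genuinely from the paper's proof, which is synthetic rather than analytic: the paper first asserts (without a detailed argument) that for fixed $p,q$ the detour ratio is minimized when $r$ is adjacent to one of them, and then, fixing $p$ and letting $r=p_{i-1}$, $q=p_i$, compares the ratio to that of the next triple $p,p_i,p_{i+1}$ via a rotation of $p_{i-1},p_i$ about $p$ by the exterior angle $\pi/(k+1)$ and a similar-triangles argument. Your route---parametrize by the arc counts $(a,b)$, collapse the ratio to $\cos\!\big(\tfrac{(a-b)\delta}{2}\big)\big/\cos\!\big(\tfrac{(a+b)\delta}{2}\big)$ via sum-to-product, and then minimize by monotonicity of $\cos$ and $\tan$ on $[0,\pi/2)$---handles both reductions uniformly and makes the first step (which the paper leaves implicit) explicit. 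The trade-off is that the paper's picture gives a direct geometric reason why moving $q$ one step farther from $p$ increases the detour, whereas your proof hides this behind the derivative of $\ln g$; conversely, your closed form makes the equality case $a=b=1$ fall out immediately and avoids any appeal to a figure.
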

\begin{proof} 
For fixed $p$ and $q$, the ratio $(|pr|+|rq|)/|pq|$ is minimized
when $r$ is adjacent to either $p$ or $q$. In that case, the angle $\alpha=\angle qpr=\pi/(k+1)$.
We show that for a fixed point $p$ and three consecutive vertices $p_{i-1},p_i$ and $p_{i+1}$
of the regular $(k+1)$-gon such that $|pp_{i-1}|<|pp_i|<|pp_{i+1}|$ (see Figure~\ref{fig-lbkoff})
the ratio $(|pp_{i-1}|+|p_{i-1}p_i|)/|pp_i|$ is smaller than $(|pp_i|+|p_{i}p_{i+1}|)/|pp_{i+1}|$ and
the result follows.

Without loss of generality, $p_{i-1},p_i$ and $p_{i+1}$ are in clockwise order. Let $p_{i-1}'$ and $p_i'$
be the rotation of $p_{i-1}$ and $p_i$ around $p$ by a clockwise angle of $\alpha$. Also, let $p_i^*$
be the intersection of $\overline{pp_i}$ with the parallel line to $\overline{p_ip_{i+1}}$ 
through $p_i'$. Triangle $pp_i^*p_i'$ is similar to triangle $pp_ip_{i+1}$.
Therefore, 
\begin{eqnarray*}
 (|pp_i|+|p_{i}p_{i+1}|)/|pp_{i+1}| & = & (|pp_i^*|+|p_i^*p_{i}'|)/|pp_{i}'| \\
                                    & = & (|pp_{i-1}'|+|p_{i-1}'p_i^*| +|p_i^*p_{i}'|)/|pp_{i}'|  \\
                                    & > & (|pp_{i-1}'|+|p_{i-1}'p_i'|)/|pp_i'|\\
                                    & = & (|pp_{i-1}|+|p_{i-1}p_i|)/|pp_i|.
\end{eqnarray*}
Therefore, the ratio $(|pp_{i-1}|+|p_{i-1}p_i|)/|pp_i|$ is minimized when $p_{i-1}$ is adjacent to $p$. 
\qed 
\end{proof}

\begin{lemma}  \label{prop-lbkoff} 
For $k>4$, we have $t(k)\geq 1/ \cos(\frac{\pi}{k+1})$.
\end{lemma}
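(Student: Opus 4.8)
The plan is to exhibit a single finite point set for which every $k$-chromatic spanner is forced to have large stretch factor, using the vertex set of the regular $(k+1)$-gon together with the pigeonhole principle and Lemma~\ref{lemma-lowerk}. Concretely, let $P=\{p_1,\dots,p_{k+1}\}$ be the set of vertices of a regular $(k+1)$-gon, and let $G$ be an arbitrary $k$-chromatic $t$-spanner of $P$ with $k$-coloring $c$. Since $|P|=k+1>k$, there must exist two distinct vertices $p,q\in P$ with $c(p)=c(q)$.

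Next I would invoke Proposition~\ref{helloworld}: because $G$ is a $k$-chromatic $t$-spanner, its coloring $c$ satisfies the $t$-ellipse property. Applied to the monochromatic pair $p,q$, this yields a point $r\in P$ with $c(r)\neq c(p)$ (hence $r\notin\{p,q\}$) and $|pr|+|rq|\leq t\,|pq|$. Thus $p,r,q$ are three distinct vertices of the regular $(k+1)$-gon, and Lemma~\ref{lemma-lowerk} gives $(|pr|+|rq|)/|pq|\geq 1/\cos(\pi/(k+1))$. Combining the two inequalities forces $t\geq 1/\cos(\pi/(k+1))$.

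Since this bound holds for \emph{every} $k$-chromatic $t$-spanner of this particular set $P$, no such spanner exists with $t<1/\cos(\pi/(k+1))$, and therefore $t(k)\geq 1/\cos(\pi/(k+1))$. I do not expect any real obstacle here: the substantive geometric work has already been done in Lemma~\ref{lemma-lowerk}, and the only point worth noting is that—unlike the lower bounds for $k=2,3,4$, which used $n$-gons with $n\to\infty$ and thus produced bounds of the form ``value $-\epsilon$''—here the instance is a single finite point set, so the inequality $t(k)\geq 1/\cos(\pi/(k+1))$ comes out exactly, with no limiting argument needed.

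\begin{proof}
Let $P=\{p_1,\dots,p_{k+1}\}$ be the set of vertices of a regular $(k+1)$-gon, and let $G$ be any $k$-chromatic $t$-spanner of $P$, with associated $k$-coloring $c$. Since $|P|=k+1>k$, the pigeonhole principle implies that there are two distinct points $p,q\in P$ with $c(p)=c(q)$. By Proposition~\ref{helloworld}, $c$ satisfies the $t$-ellipse property, so there is a point $r\in P$ with $c(r)\neq c(p)$ and $|pr|+|rq|\leq t|pq|$. In particular, $r$ is distinct from both $p$ and $q$, so $p,r,q$ are three distinct vertices of the regular $(k+1)$-gon. By Lemma~\ref{lemma-lowerk},
\[
 t \;\geq\; \frac{|pr|+|rq|}{|pq|} \;\geq\; \frac{1}{\cos\frac{\pi}{k+1}}.
\]
Hence no $k$-chromatic $t$-spanner of $P$ exists with $t<1/\cos\frac{\pi}{k+1}$, and therefore $t(k)\geq 1/\cos\frac{\pi}{k+1}$.
\qed
\end{proof}
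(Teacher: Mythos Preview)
Your proof is correct and follows essentially the same approach as the paper: exhibit the regular $(k+1)$-gon, use pigeonhole to find a monochromatic pair, and invoke Lemma~\ref{lemma-lowerk} to lower-bound the stretch factor. The only difference is cosmetic---you explicitly pass through Proposition~\ref{helloworld} to obtain the witness $r$, whereas the paper speaks directly of ``the stretch factor between $p$ and $q$''; both formulations amount to the same argument.
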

\begin{proof} 
Let $P=\{p_1,\ldots,p_{k+1}\}$ be the vertex set of a regular 
$(k+1)$-gon. 
By Lemma~\ref{lemma-lowerk}, for any three distinct 
points $p$, $q$, and $r$ in $P$, 
the ratio $(|pr|+|rq|)/|pq|$ is at least $1/ \cos(\frac{\pi}{k+1})$ and 
this value is achieved when $p$, $r$, and $q$ are 
consecutive vertices. 

By the pigeonhole principle, any $k$-coloring of $P$ has to assign the 
same color to at least two points, say $p$ and $q$. By the argument 
above, the stretch factor between $p$ and $q$ is at least 
$1/ \cos(\frac{\pi}{k+1})$.
\qed 
\end{proof}

%
The constructions we have shown in this section use a quadratic number of edges since we 
consider the complete $k$-partite graph induced by the coloring of the points. 
To reduce this to a linear number of edges we apply Proposition~\ref{prop-gudm}, 
which slightly increases the stretch factor, giving us the following:

\begin{theorem}
The following are true:
\begin{enumerate}
\item 
For any point set $P$ in the plane, the complete $k$-partite graph induced by the $k$-coloring 
of $P$ computed by the above algorithms has a stretch factor at most
$3$, $2$, $\sqrt{2}$, and $1+ 2 \sin \frac{\pi}{2(k+1)}$ for $k=2$, $k=3$,
$k=4$, $k>4$, respectively. 

\item
For any $\epsilon>0$, there exist point sets such that no coloring algorithm can 
compute a $k$-coloring that has the $t$-ellipse property for $t$ smaller than
$3-\epsilon$, $2-\epsilon$, $\sqrt{2}-\epsilon$, and $1/ \cos \frac{\pi}{k+1}$ 
for $k=2$, $k=3$, $k=4$, $k>4$, respectively.

\item
Thus, we have
$t(2)= 3$, $t(3)= 2$, $t(4) = \sqrt{2}$, and 
$ 1+ 2 \sin \frac{\pi}{2(k+1)} \geq  t(k) \geq 1/ \cos \frac{\pi}{k+1}$ 
for $k > 4$.

\item
It is possible to obtain a $((1+\epsilon)t(k))$-spanner that has $O(|P|)$ edges, 
from 
the coloring computed by the above algorithms.

\end{enumerate}
\end{theorem}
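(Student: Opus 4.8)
The plan is to assemble the four parts of the theorem directly from the lemmas established earlier in this section, using the coloring/spanner dictionary of Section~\ref{section-ellipse}.

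For part~(1) I would invoke Lemmas~\ref{prop-ub2off}, \ref{prop-ub3off}, \ref{prop-ub4off}, and~\ref{prop-ubkoff} for $k=2,3,4$, and $k>4$ respectively. Each of these asserts that the $k$-coloring produced by the corresponding algorithm (Algorithm~\ref{alg-off2}, \ref{alg-off3}, \ref{alg-off4}, \ref{alg-offk}) is a valid $k$-coloring with the appropriate $t$-ellipse property; hence, by our convention that the stretch factor of a coloring is that of its induced complete $k$-partite graph, together with Proposition~\ref{prop-komplete}, this graph is a $t$-spanner with the stated value of $t$ (for $k>4$, the value furnished by Lemma~\ref{prop-ubkoff}).

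For part~(2) I would exhibit, for each $k$, the lower-bound point set already constructed: the regular $n$-gon for $k=2$ (Lemma~\ref{prop-lb2off}), the two ``decorated'' $n$-gons for $k=3$ and $k=4$, and the vertex set of the regular $(k+1)$-gon for $k>4$ (Lemma~\ref{prop-lbkoff}). On each such instance, every $k$-coloring fails the $t$-ellipse property for $t$ below the stated threshold; by Proposition~\ref{helloworld}, a $k$-chromatic $t$-spanner would induce a $k$-coloring with the $t$-ellipse property, so no coloring algorithm — and more generally no spanner construction — can beat these bounds on these instances. Part~(3) is then obtained by combining (1) and (2) and letting $\epsilon\to 0$ in the definition of the infimum $t(k)$: the upper bounds give $t(k)\le 3,2,\sqrt 2$ and $t(k)\le 1+2\sin\frac{\pi}{2(k-1)}$, while the lower bounds force $t(2)\ge 3$, $t(3)\ge 2$, $t(4)\ge\sqrt 2$, and $t(k)\ge 1/\cos\frac{\pi}{k+1}$ for $k>4$, matching exactly in the first three cases. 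Part~(4) is immediate from Proposition~\ref{prop-gudm}, which converts any $k$-coloring with the $t$-ellipse property into a $k$-chromatic $((1+\epsilon)t)$-spanner with $O(|P|)$ edges.

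Since every ingredient has already been proved, there is no substantial obstacle; the only point requiring a sentence of care is the passage from the per-$\epsilon$ lower bounds (stretch factor $\ge t^{*}-\epsilon$, for $k\le 4$) to the bound $t(k)\ge t^{*}$ on the infimum itself, i.e.\ ruling out $t(k)<t^{*}$. This follows because a $k$-chromatic $t$-spanner that existed for \emph{every} point set would, in particular, handle the instance witnessing stretch factor $\ge t^{*}-\epsilon$ with $\epsilon=(t^{*}-t)/2$, a contradiction.
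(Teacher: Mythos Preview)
Your proposal is correct and matches the paper's approach exactly: the paper gives no separate proof for this theorem, presenting it purely as a summary of the preceding lemmas together with Proposition~\ref{prop-gudm} for the linear-size spanner in part~(4). Your write-up is in fact more careful than the paper's, since you spell out the passage from the per-$\epsilon$ lower bounds to the infimum bound and you correctly record the upper bound for $k>4$ as $1+2\sin\frac{\pi}{2(k-1)}$ from Lemma~\ref{prop-ubkoff} (the $2(k+1)$ in the theorem statement is a typo in the paper).
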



%

\section{Upper and lower bounds on $t'(k)$}
\label{section-online}
 
Recall that in the on-line setting, the algorithm 
receives the points of $P$ one at a time and assigns a color to a point as soon as it receives it.
It cannot change the color of a point after this assignment. Naturally, this setting is more difficult which is
reflected by higher bounds for $t'(3)$ and $t'(4)$. However, we are still able to give the exact value of $t'(k)$ 
for $k=2,3,4$ and provide upper and lower bounds when $k>4$.
In the online setting, we actually provide a general algorithm that is the same 
for all values of $k\geq 2$. Although it is similar to 
Algorithm~\ref{alg-offk}, there are at least two important differences. 
First, since we are in an on-line setting, we cannot process the points 
in the order of their $y$-coordinates. Therefore, we have to use cones 
with an angle greater than $\pi/(k-1)$. If we choose the cones a priori as we do in 
Algorithm~\ref{alg-offk}, we obtain cones whose angle is $2\pi/(k-1)$. 
However, by aligning the cone's bisectors on the points that are chosen 
to be neighbors, we can get a slightly better stretch factor, since in this case, the angle
is reduced to $2\pi/k$. 

\begin{algorithm}
\caption{Online $k$ Colors}\label{alg-onk}
\begin{algorithmic}[1]
\REQUIRE $P$, an arbitrarily ordered list of points in the plane
\ENSURE $c$, a $k$-coloring of $P$
\STATE Let $p_1,\ldots,p_n$ be the points of $P$ in the given order
\FOR{$i=1$ to $n$}
\STATE $P_i\leftarrow \{p_1,\ldots,p_{i-1}\}$
\STATE $j\leftarrow 0$
\WHILE{$P_i\neq\emptyset$}

\STATE $j\leftarrow j+1$

\STATE $r_j\leftarrow$ a nearest neighbor of $p_i$ in $P_i$

\STATE $P_i\leftarrow P_i\setminus \{r_j\}$

\FOR{each $q\in P_i$}

\IF{$\angle qp_ir_j\leq 2\pi/k$}

\STATE $P_i\leftarrow P_i\setminus \{q\}$

\ENDIF

\ENDFOR

\ENDWHILE

\STATE $c(p_i)\leftarrow \min\{l>0 \mid \forall r_j,c(r_j)\neq l\}$

\ENDFOR
\end{algorithmic}
\end{algorithm}

\begin{lemma} \label{prop-ubkon}
For $k\geq 2$, Algorithm~\ref{alg-onk} computes a $k$-coloring that has 
the $t$-ellipse property for $t=1+2\sin(\pi/k)$.
Thus, we have $t'(k) \leq 1+2\sin(\pi/k)$.
\end{lemma}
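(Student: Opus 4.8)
The plan is to show directly that the $k$-coloring $c$ produced by Algorithm~\ref{alg-onk} satisfies the $(1+2\sin(\pi/k))$-ellipse property; the bound on $t'(k)$ then follows from Proposition~\ref{prop-komplete}. So let $p$ and $q$ be two distinct points of $P$ with $c(p)=c(q)$, and assume without loss of generality that $q$ appears before $p$ in the input order, so that $q\in P_i$ when $p=p_i$ is processed. First I would argue that there must be some ``representative'' $r_j$ chosen during the processing of $p_i$ that lies in the same cone as $q$, i.e. with $\angle q p_i r_j \le 2\pi/k$: indeed, $q$ is removed from $P_i$ at some point of the while-loop, and by lines 8--12 the only way a point leaves $P_i$ is either by being selected as some $r_j$ (in which case take $r=r_j=q$, and $\angle qp_ir_j=0$) or by being within angle $2\pi/k$ of the $r_j$ selected in that iteration. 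Either way there is an $r_j$ with $\angle q p_i r_j \le 2\pi/k$ and, crucially, $r_j$ was chosen as a \emph{nearest neighbour} of $p_i$ in $P_i$ at the moment it was selected, while $q$ was still present in $P_i$ at that moment; hence $|p_i r_j| \le |p_i q|$.

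Next I would carry out the geometric estimate, which is essentially the one in the proof of Lemma~\ref{prop-ubkoff}. Write $p=p_i$, $r=r_j$, $\alpha=\angle rpq \le 2\pi/k$, and let $r'$ be the point on the ray from $p$ through $r$ at distance $|pq|$ from $p$. Since $|pr|\le |pq|$, the point $r$ lies on the segment $pr'$, so $|pr|+|rr'| = |pr'| = |pq|$, and therefore
\[
|pr|+|rq| \;\le\; |pr| + |rr'| + |r'q| \;=\; |pq| + |r'q|.
\]
The triangle $p r' q$ is isosceles with $|pr'|=|pq|$ and apex angle $\angle r'pq = \alpha$, so $|r'q| = 2\sin(\alpha/2)\,|pq| \le 2\sin(\pi/k)\,|pq|$. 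Combining, $|pr|+|rq| \le (1+2\sin(\pi/k))\,|pq|$. Since $r=r_j$ is one of the points whose color $p$ was required to avoid in line~15, we have $c(r)\neq c(p)$ (and $r\neq p$, $r\neq q$ since $r$ was present in $P_i$ alongside $q$). This is exactly the witness required by the $t$-ellipse property for $t=1+2\sin(\pi/k)$.

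Finally I would note that the coloring uses at most $k$ colors: the number of representatives $r_1,r_2,\dots$ selected for a given $p_i$ is at most $k$, because each newly selected $r_j$ has pairwise angle greater than $2\pi/k$ with all previously selected ones (a previously selected $r_{j'}$ would have removed $r_j$ from $P_i$ otherwise), and more than $k$ rays from $p_i$ cannot be pairwise separated by angles exceeding $2\pi/k$; hence line~15 always finds a color in $\{1,\dots,k\}$. I expect the main obstacle to be the bookkeeping in the first paragraph --- making precise why \emph{some} selected $r_j$ ends up in the same cone as $q$ with $|pr_j|\le|pq|$, given that the cones are not fixed a priori but are instead carved out dynamically around the successive nearest neighbours --- and, relatedly, the clean argument that at most $k$ representatives are ever selected. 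The geometric inequality itself is routine once the combinatorial setup is in place.
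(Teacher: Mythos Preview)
Your proposal is correct and follows the same approach as the paper, which simply notes that at most $k-1$ representatives $r_j$ are selected and then defers to the geometric estimate of Lemma~\ref{prop-ubkoff}; your write-up in fact spells out the bookkeeping (why some $r_j$ satisfies $\angle qp_ir_j\le 2\pi/k$ and $|p_ir_j|\le|p_iq|$) more carefully than the paper does.

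One small slip to fix in the last paragraph: you bound the number of representatives by~$k$, but your own angular argument actually gives $k-1$. If $m$ rays from $p_i$ are pairwise separated by angles strictly greater than $2\pi/k$, then in particular the $m$ consecutive angular gaps (which sum to $2\pi$) each exceed $2\pi/k$, forcing $m\cdot 2\pi/k<2\pi$, i.e.\ $m\le k-1$. The sharper bound is what is needed: with only ``at most $k$'' representatives they could in principle exhaust all $k$ colors, and line~15 would then return $k+1$. With $k-1$ representatives this cannot happen, and the coloring is genuinely a $k$-coloring, as the paper states.
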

\begin{proof}
Algorithm~\ref{alg-onk} produces a $k$-coloring, because each $p_i$
selects at most $k-1$ points $r_j$. If there were more than $k-1$
such points, then two of them would form an angle of $2\pi/k$ or
less around $p_i$. However, this situation cannot occur because of
lines~10 and~11. The proof on the stretch factor is 
identical to the one given in
Lemma~\ref{prop-ubkoff}.
\qed 
\end{proof}

\begin{corollary} 
We have $t'(2) \leq 3$, $t'(3) \leq 1+\sqrt{3}$ and 
$t'(4) \leq 1+\sqrt{2}$.  
\end{corollary}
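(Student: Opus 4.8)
The plan is to obtain this corollary as an immediate specialization of Lemma~\ref{prop-ubkon}, which already establishes the general bound $t'(k) \leq 1 + 2\sin(\pi/k)$ for all $k \geq 2$. Nothing new needs to be proved about colorings or the $t$-ellipse property; the entire content is the evaluation of the trigonometric expression $1 + 2\sin(\pi/k)$ at the three small values $k = 2, 3, 4$.

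Concretely, I would proceed as follows. First, cite Lemma~\ref{prop-ubkon} to get $t'(k) \leq 1 + 2\sin(\pi/k)$. Then substitute $k = 2$: since $\sin(\pi/2) = 1$, this gives $t'(2) \leq 1 + 2 = 3$. Next substitute $k = 3$: since $\sin(\pi/3) = \sqrt{3}/2$, this gives $t'(3) \leq 1 + 2 \cdot (\sqrt{3}/2) = 1 + \sqrt{3}$. Finally substitute $k = 4$: since $\sin(\pi/4) = \sqrt{2}/2$, this gives $t'(4) \leq 1 + 2 \cdot (\sqrt{2}/2) = 1 + \sqrt{2}$. One short displayed line collecting these three evaluations suffices.

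There is essentially no obstacle here: the only thing to be careful about is recalling the exact values of the sine function at $\pi/2$, $\pi/3$, and $\pi/4$ and simplifying $2 \cdot \frac{\sqrt{3}}{2}$ and $2 \cdot \frac{\sqrt{2}}{2}$ correctly. So the "hard part" is purely bookkeeping, and the proof can be stated in a line or two. The corollary is worth stating separately only because these three constants ($3$, $1+\sqrt{3}$, $1+\sqrt{2}$) are the ones that will be shown to be tight in the lower-bound lemmas that follow, matching the entries in Table~\ref{tab-summary}.

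\begin{proof}
By Lemma~\ref{prop-ubkon}, $t'(k) \leq 1 + 2\sin(\pi/k)$ for every $k \geq 2$. Substituting $k = 2, 3, 4$ and using $\sin(\pi/2) = 1$, $\sin(\pi/3) = \sqrt{3}/2$, and $\sin(\pi/4) = \sqrt{2}/2$, we obtain
\[
 t'(2) \leq 1 + 2 = 3, \qquad t'(3) \leq 1 + \sqrt{3}, \qquad t'(4) \leq 1 + \sqrt{2}.
\]
\qed
\end{proof}
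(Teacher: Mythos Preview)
Your proposal is correct and matches the paper's approach exactly: the corollary is stated without proof in the paper because it follows immediately from Lemma~\ref{prop-ubkon} by plugging in $k=2,3,4$, precisely as you do. There is nothing to add.
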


Since an off-line lower bound also provides an on-line lower bound,
we have $t'(2) \geq t(2) = 3$. It follows that $t'(2)=3$. 
We now prove that Algorithm~\ref{alg-onk} is also optimal for $k=3$ and
$4$.

\begin{figure}
\centering\includegraphics{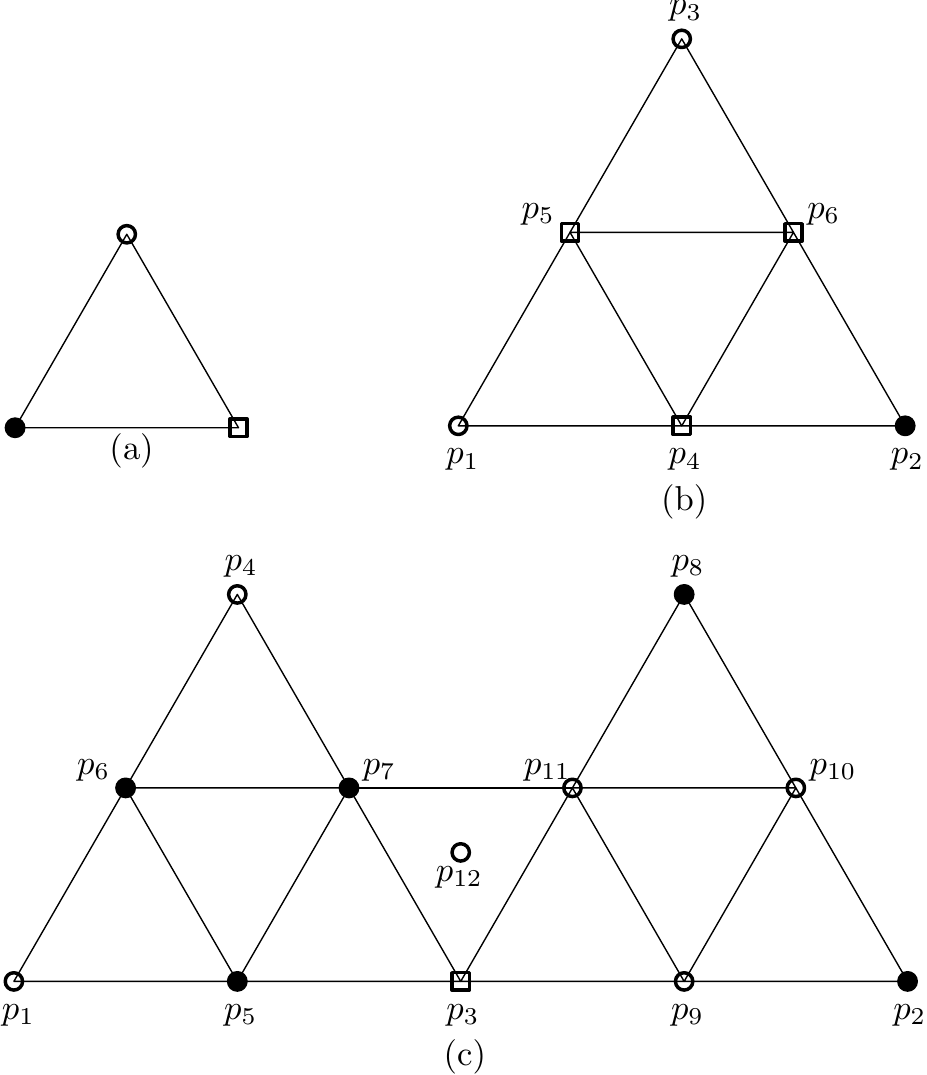}\caption{Online lower bound of
$1+\sqrt{3}$ for $k=3$.}\label{fig-lb3on}
\end{figure}

\begin{lemma} 
Let $\mathcal{A}$ be an arbitrary on-line coloring algorithm
that guarantees a 3-coloring that has the $t$-ellipse property. Then
its stretch factor, $t$, is at least $1+\sqrt{3}$.
\end{lemma}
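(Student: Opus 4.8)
The plan is to exhibit a point set, presented on-line in a carefully chosen order, that forces the adversary algorithm $\mathcal{A}$ to create a monochromatic pair whose detour ratio in any completing coloring is at least $1+\sqrt{3}$. The strategy mirrors the off-line lower-bound constructions based on regular polygons, but it must be adapted so that the \emph{order of presentation} — not just the geometry — pins the algorithm down. First I would release a small cluster of points arranged near the vertices of a regular hexagon-like configuration (or, following Figure~\ref{fig-lb3on}, an equilateral triangle together with one of its reflected copies), scaled so that consecutive points are at unit distance. Since $\mathcal{A}$ has only three colors and the cluster contains four or more mutually ``close'' points, the pigeonhole principle guarantees that two of them, say $p$ and $q$, receive the same color.

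The key point is that the adversary does not know $\mathcal{A}$'s future behavior, so it must \emph{react}: after $\mathcal{A}$ has committed colors to the initial cluster, the adversary identifies the monochromatic pair $p,q$ and then continues to release points only in positions that do not help witness the $t$-ellipse property for $(p,q)$ with a small $t$. Concretely, any witness $r$ for $(p,q)$ must satisfy $c(r)\neq c(p)$ and lie in the ellipse $\{x:|px|+|xq|\le t|pq|\}$; I would choose the geometry so that the only points inside a thin ellipse around $pq$ are points already colored with $c(p)$, or no points at all, while points placed outside that ellipse contribute a ratio of at least $1+\sqrt{3}$. The value $1+\sqrt{3}$ itself should arise exactly as in Lemma~\ref{prop-ubkon}: if the nearest same-distance configuration forces the witness $r$ to sit at angular distance roughly $2\pi/3$ from $q$ as seen from $p$ (the natural ``blocked cone'' situation with three colors), then $|pr|+|rq|\ge |pq| + 2\sin(\pi/3)\,|pq| = (1+\sqrt3)|pq|$, and one lets the polygon size tend to infinity (replacing the hexagon by a $(2\cdot 3^m)$-gon or iterating the triangle construction) to push the ratio arbitrarily close to $1+\sqrt3$ from below, yielding $t\ge 1+\sqrt3$.

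The main obstacle is making the adversary argument genuinely \emph{adaptive and robust}: I must ensure that \emph{whatever} three-coloring $\mathcal{A}$ chooses on the first batch, the adversary has a continuation forcing the bound, and that the forced monochromatic pair cannot be ``rescued'' by some cleverly colored later point. This requires a parity or counting argument on the cyclic structure (as in the off-line proofs, where an odd $n$-gon forces two adjacent same-colored vertices) combined with a geometric lemma showing that the candidate witnesses other than the adjacent vertex all give ratio $\ge 1+\sqrt3$ — essentially an on-line analogue of Lemma~\ref{lemma-lowerk}. I would structure the write-up as: (i) describe the presentation order and the reactive placement; (ii) invoke pigeonhole/parity to extract the monochromatic pair $p,q$ with a controlled neighborhood; (iii) a short computation, exactly the one in Lemma~\ref{prop-ubkon} read backwards, showing every admissible witness yields ratio $\ge 1+2\sin(\pi/3)=1+\sqrt3$; (iv) take a limit over a family of such instances to conclude $t\ge 1+\sqrt3$, matching the upper bound from the Corollary and hence $t'(3)=1+\sqrt3$.
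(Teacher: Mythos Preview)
Your plan has a structural gap that would cap the bound at $2$, not $1+\sqrt{3}$. You propose to release an initial cluster, locate a monochromatic pair $(p,q)$ by pigeonhole, and then ``continue to release points only in positions that do not help witness the $t$-ellipse property for $(p,q)$.'' But once the cluster is colored, the witnesses for $(p,q)$ are already present among the cluster points; releasing \emph{more} points can only add witnesses, never remove them. So your argument reduces to an off-line statement: some fixed cluster admits no $3$-coloring with stretch below $1+\sqrt{3}$. That statement is false, since $t(3)=2$. Concretely, if your cluster is (pieces of) equilateral triangles and $p,q$ are two same-colored vertices at unit distance, the third vertex is a witness with ratio exactly $2$, and nothing you place afterwards can spoil it.

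The idea the paper uses, which you are missing, runs in the opposite direction: the adversary first forces the algorithm to use \emph{all three} colors on the vertices of some equilateral triangle, and only then places the last point at its center. The center must repeat one of the three colors, say that of vertex $a$; the only available witnesses are the other two vertices, and for center $c$ and vertex $b$ one gets $(|cb|+|ba|)/|ca|=(r+s)/r=1+\sqrt{3}$, where $s$ is the side and $r=s/\sqrt{3}$ the circumradius. The work in the proof is not a limiting argument (the bound is achieved exactly by a finite set, so your step~(iv) is unnecessary) but rather a gadget that \emph{forces} the rainbow triangle: one first shows that a six-point block has, up to symmetry, a unique coloring avoiding ratio $1+\sqrt{3}$, and then glues copies so that among three candidate equilateral triangles at least one must be rainbow. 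Your write-up should replace the ``find a monochromatic pair'' step with this ``force a rainbow triangle, then drop the center'' step.
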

\begin{proof}
The proof is by an adversarial argument, where the adversary forces 
a stretch factor of at least $1+\sqrt{3}$. The main objective of the 
adversary is to force $\mathcal{A}$ to assign different colors to 
the vertices of an equilateral triangle. Then, the next point is 
placed in the center of this triangle (see Figure~\ref{fig-lb3on}(a)). 
This results in a stretch factor of $1+\sqrt{3}$.

Consider Figure~\ref{fig-lb3on}(b), where the points are numbered by 
the order of insertion. Up to symmetry, there is only one way to 
assign colors to points $p_1$ to $p_6$ such that $t<1+\sqrt{3}$ 
(e.g., $c(p_1)=red, c(p_2)=blue, c(p_3)=red, c(p_4)=green, c(p_5)=green, c(p_6)=green$).
The key property is that the points $p_3, p_4$ and $p_5$ must be assigned the same color that is different from the colors assigned 
to the first three points. If any of these conditions is violated, then the spanning ratio is at least $1 + \sqrt{3}$.

Next, consider Figure~\ref{fig-lb3on}(c), where the point set of
Figure~\ref{fig-lb3on}(b) is reproduced twice. 
Consider triangles $\triangle(p_3,p_5,p_7)$, $\triangle(p_3,p9,p_{11})$ and 
$\triangle(p_3,p_7,p_{11})$ after the insertion of 
$p_{11}$.
At least one of these triangles has to be assigned three different 
colors, otherwise, the stretch factor would already be $1+\sqrt{3}$.
Assume w.l.o.g. that triangle $\triangle(p_3,p_7,p_{11})$ is assigned three
different colors then by the insertion of point $p_{12}$ in the center of the
triangle, we force a spanning ratio of $1 + \sqrt{3}$, as required. 

\qed 
\end{proof}

\begin{figure}
\centering\includegraphics{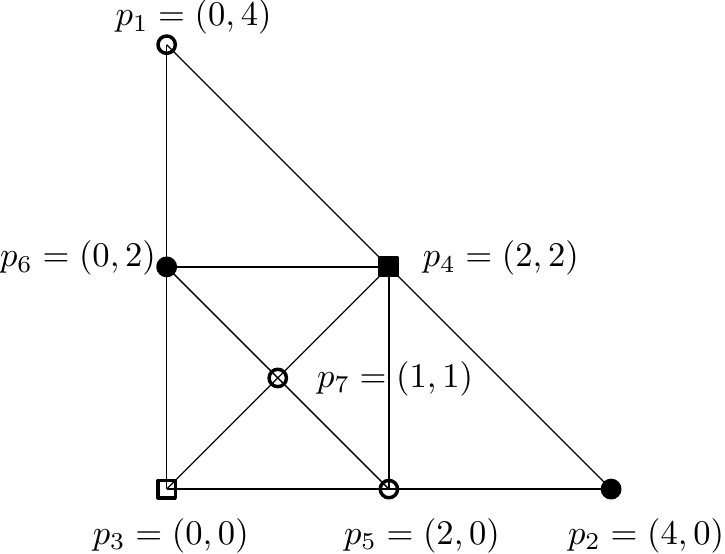}\caption{Online lower bound of
$1+\sqrt{2}$ for $k=4$.}\label{fig-lb4on}
\end{figure}

\begin{lemma} 
Let $\mathcal{A}$ be an arbitrary on-line coloring algorithm
that guarantees a 4-coloring that has the $t$-ellipse property. Then
the stretch factor, $t$ is at least $1+\sqrt{2}$.
\end{lemma}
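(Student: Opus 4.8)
The plan is to mimic the three-color argument with an adaptive adversary. The adversary's target is a \emph{rainbow unit square}: four points $a,b,c,d$ that are, in this cyclic order, the vertices of a square of side $1$ and that have received four pairwise distinct colors. Once such a square is present, the adversary inserts a final point $m$ at its center. Since only four colors are available, $m$ repeats the color of a vertex, say $c(m)=c(a)$. Here $|am|=|bm|=|cm|=|dm|=1/\sqrt2$, $|ab|=|ad|=1$ and $|ac|=\sqrt2$, so the two cheapest detours from $a$ to $m$ through a differently colored vertex are $a\to b\to m$ and $a\to d\to m$, each of length $1+1/\sqrt2=(1+\sqrt2)\,|am|$, whereas $a\to c\to m$ has length $3\,|am|$. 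Hence, provided the square was placed so that no other point of $P$ lies inside the ellipse $\{x:|ax|+|xm|\le(1+\sqrt2)|am|\}$, the $t$-ellipse property applied to the monochromatic pair $(a,m)$ forces $t\ge 1+\sqrt2$.

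It remains to force a rainbow unit square while keeping its ellipse clean. First I would construct a \emph{base gadget}: a small, explicitly placed point set together with a proof that -- up to a permutation of the four colors and the symmetries of the set -- \emph{every} coloring of the gadget with stretch factor below $1+\sqrt2$ is driven into essentially one color pattern, one that in particular assigns four pairwise distinct colors to a designated four-point sub-configuration forming a unit square. The mechanism behind such a proof is the one already used in Lemma~\ref{prop-ubkon}: if $u$ and $v$ receive the same color, the $t$-ellipse property demands a differently colored point inside $E_{u,v}=\{x:|ux|+|xv|\le t|uv|\}$, and $E_{u,v}$ shrinks towards the segment $uv$ as $t\downarrow 1$; by spacing the gadget's points at the right mutual distances and angular positions (so that, for each pair we wish to separate, every point that could sit in $E_{u,v}$ has already been forced to the color of $u$), one rules out every undesired coloring through a finite case analysis over the colors of the successively inserted points.

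Because the coloring is produced on-line and the algorithm's tie-breaking is not under our control, a single base gadget need not determine \emph{which} four points end up rainbow; so, exactly as two copies of the six-point gadget are combined in Figure~\ref{fig-lb3on}(c), I would overlay several copies of the base gadget and add a parity/pigeonhole step. Then either some square in a prescribed finite family is rainbow -- in which case the adversary inserts its center and the computation in the first paragraph yields $t\ge 1+\sqrt2$ -- or no such square is rainbow, in which case the rigid structure forced by the gadget has been violated and the current coloring already has stretch $\ge 1+\sqrt2$. Either way the lemma follows.

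The main obstacle is the gadget design: choosing coordinates so that (i) every coloring of the gadget with stretch below $1+\sqrt2$ is rigid in exactly the way required, which is an exhaustive but finite verification that I would carry out symbolically as in Lemma~\ref{lemma-3-planar}, and (ii) the overlay geometry simultaneously forces a rainbow square \emph{and} keeps every gadget point outside the (small) ellipses determined by $m$ and the square's vertices, so that the two square edges at $a$ -- whose detours have length exactly $(1+\sqrt2)|am|$ -- really are the best witnesses available for the pair $(a,m)$. Everything beyond the gadget is a routine adaptation of the three-color proof.
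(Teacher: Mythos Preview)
Your endpoint is exactly the paper's: force four points carrying all four colors, drop a point $p_7$ at their center, and the unavoidable monochromatic pair has stretch $1+\sqrt2$. Where you diverge is in the forcing step, and there you are working far too hard. The paper's entire construction is a single, non-adaptive seven-point sequence (Figure~\ref{fig-lb4on}): two preliminary points $p_1,p_2$ followed by $p_3,p_4,p_5,p_6$, placed so that if any two of $p_3,\dots,p_6$ share a color the coloring of $\{p_1,\dots,p_6\}$ already has stretch exceeding $1+\sqrt2$; hence these four points are rainbow, and inserting the center $p_7$ finishes. No adaptive adversary, no base gadget, no overlaid copies, no parity argument --- the four-color case is \emph{easier} than the three-color case, not harder, precisely because four mutually constrained points already exhaust the palette.

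As written, your proposal is also not a proof but a programme: you explicitly identify the gadget design and its symbolic verification as ``the main obstacle'' and never exhibit coordinates or carry out the case analysis. That gap is real. If you want to repair the argument, drop the overlay machinery entirely and instead look for a small fixed configuration in which (i) four designated points are forced rainbow (two auxiliary points suffice for this, as the paper shows), and (ii) those auxiliary points sit outside the four small ellipses determined by the center and each corner, so the $(1+\sqrt2)$ detours through adjacent corners are genuinely optimal. That is a concrete finite check, and it replaces everything after your first paragraph.
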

\begin{proof} 
Consider the point set depicted in Figure~\ref{fig-lb4on}.
$\mathcal{A}$ must assign different colors to $p_3,p_4,p_5$ and $p_6$, otherwise the stretch factor will already be greater than $1+\sqrt{2}$. 
Upon introduction 
of $p_7$, $\mathcal{A}$ must assign it the same color as one of 
$p_3,p_4,p_5$ or $p_6$. The stretch factor between $p_7$ and that 
point is $1+\sqrt{2}$. 
\qed 
\end{proof}

\begin{lemma} 
Let $\mathcal{A}$ be an arbitrary on-line coloring algorithm that 
guarantees a $k$-coloring that has the $t$-ellipse property. Then the stretch factor, $t$, is at 
least $1/\cos(\frac{\pi}{ k})$.
\end{lemma}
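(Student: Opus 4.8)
The plan is to run an adaptive adversary whose entire arsenal is the vertex set of a regular $k$-gon, together with its centre if needed. Fix a regular $k$-gon with vertices $v_1,\dots ,v_k$, circumradius $R$ and centre $O$; recall that $|v_iv_j|\ge 2R\sin(\pi/k)$ for all $i\ne j$, with equality exactly when $v_i$ and $v_j$ are adjacent. I would have the adversary first feed $v_1,\dots ,v_k$ to $\mathcal A$, one point at a time, and then branch on the colouring that $\mathcal A$ produces.

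\emph{Branch 1: some colour is repeated.} Suppose $\mathcal A$ assigns the same colour to two vertices $v_i$ and $v_j$. Apply the $t$-ellipse property to the point set $\{v_1,\dots ,v_k\}$ and the monochromatic pair $(v_i,v_j)$: it produces a vertex $r\notin\{v_i,v_j\}$ with $c(r)\ne c(v_i)$ and $|v_ir|+|rv_j|\le t\,|v_iv_j|$. Now $v_i,v_j,r$ are three distinct vertices of a regular $k$-gon, so Lemma~\ref{lemma-lowerk} (whose proof applies verbatim to a regular $k$-gon, the bound becoming $1/\cos(\pi/k)$, attained at consecutive vertices) gives $(|v_ir|+|rv_j|)/|v_iv_j|\ge 1/\cos(\pi/k)$, and hence $t\ge 1/\cos(\pi/k)$.

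\emph{Branch 2: all $k$ colours are used.} Then $v_1,\dots ,v_k$ carry $k$ pairwise distinct colours. The adversary now presents the single extra point $O$. Whatever colour $\mathcal A$ gives $O$, it must coincide with $c(v_i)$ for some $i$. Apply the $t$-ellipse property to $\{v_1,\dots ,v_k,O\}$ and the pair $(O,v_i)$: there is a point $r$ with $c(r)\ne c(O)=c(v_i)$ and $|Or|+|rv_i|\le t\,|Ov_i|$. This $r$ cannot be $O$ (its colour would then have to differ from $c(O)$) and cannot be $v_i$, so $r=v_j$ for some $j\ne i$. Using $|Ov_i|=|Ov_j|=R$ and $|v_iv_j|\ge 2R\sin(\pi/k)$,
\[
 t\;\ge\;\frac{|Ov_j|+|v_jv_i|}{|Ov_i|}\;\ge\;1+2\sin(\pi/k)\;>\;\frac{1}{\cos(\pi/k)},
\]
the last inequality being the elementary fact that $\cos\theta+\sin2\theta>1$ for $\theta=\pi/k$ in the relevant range; via the factorisation $\cos\theta+\sin2\theta-1=2\sin(\theta/2)\bigl(2\cos(\theta/2)\cos\theta-\sin(\theta/2)\bigr)$ the left side is seen to be positive for $0<\theta\le\pi/3$, since $\sin(\theta/2)>0$ and the second factor is decreasing in $\theta$ and still positive at $\theta=\pi/3$. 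Since $\mathcal A$ must fall into one branch or the other, in both cases $t\ge 1/\cos(\pi/k)$, which is the claim.

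I do not expect a genuine obstacle here; the one point that needs care is why the witness $r$ handed back by the $t$-ellipse property is forced to be a polygon vertex, and this is exactly what the on-line restriction provides — the centre $O$ is introduced only after $\mathcal A$ has irrevocably spent all $k$ colours on the $k$-gon, so the pair $(O,v_i)$ has no legal witness other than some $v_j$. (An off-line algorithm would evade this by reserving one colour for $O$; that is why the off-line bound in Lemma~\ref{prop-lbkoff} is the smaller $1/\cos(\pi/(k+1))$, obtained from the $(k+1)$-gon.) The only routine bookkeeping is to confirm the inequality $1+2\sin(\pi/k)>1/\cos(\pi/k)$ over the range of $k$ for which the lemma is stated, as sketched above.
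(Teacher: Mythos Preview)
Your proof is correct and follows essentially the same adversarial construction as the paper: feed the vertices of a regular $k$-gon, branch on whether a colour repeats (invoking Lemma~\ref{lemma-lowerk} for the $k$-gon) or all $k$ colours appear (then insert the centre and compute the forced ratio). Your Branch~2 bound $1+2\sin(\pi/k)$ is in fact the tight value for the centre--vertex pair, slightly sharper than the paper's stated $1+4\sin(\pi/(2k))$; both exceed $1/\cos(\pi/k)$ on the relevant range, so either suffices.
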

\begin{proof} 
Let $P=\{p_1,\ldots,p_{k},q\}$, where the $p_i$' are the vertices of
a regular $k$-gon $K$ and $q$ is the center of the circumcircle of
$K$. If, after processing $p_1$ to $p_k$, $\mathcal{A}$ assigned the
same color to two points, then as in Lemma~\ref{prop-lbkoff},
the stretch factor is $1/ \cos(\frac{\pi}{ k})$. Otherwise, all $p_i$ are
assigned different colors. When $q$ is introduced, the color
$\mathcal{A}$ assigns to it has already been assigned to some other
point $p$. In that case, the stretch factor for the edge $(q,p)$ is
$1+4\sin(\pi/ 2k)>1/ \cos(\frac{\pi}{k})$.
\qed 
\end{proof}


The constructions we have shown in this section use a quadratic number of edges since we 
consider the complete $k$-partite graph induced by the coloring of the points. 
To reduce this to a linear number of edges we apply Proposition~\ref{prop-gudm}, 
which slightly increases the stretch factor, giving us the following:

\begin{theorem}
The following are true:
\begin{enumerate}
\item
For any sequence $P$ of points in the plane, the complete $k$-partite graph 
induced by the on-line $k$-coloring of $P$ computed by the above algorithms 
has a stretch factor at most
$3$, $1+ \sqrt{3}$, $1+ \sqrt{2}$, and 
$1+ 2 \sin\frac{\pi}{k}$ for $k=2$, $k=3$, $k=4$, $k>4$, respectively. 

\item
For any $\epsilon>0$, there exist point sets such that no on-line coloring algorithm can 
compute an on-line $k$-coloring that has the $t$-ellipse property for $t$ smaller than
$3-\epsilon$, $1+ \sqrt{3}-\epsilon$, $1+ \sqrt{2}-\epsilon$, and 
$1/ \cos \frac{\pi}{k} $ 
for $k=2$, $k=3$, $k=4$, $k>4$, respectively.

\item
Thus, we have
$t'(2)= 3$, $t'(3)= 1+ \sqrt{3}$, $t'(4) = 1+ \sqrt{2}$, and 
$ 1+ 2 \sin\frac{\pi}{k}  \geq t'(k) \geq 1/ \cos \frac{\pi}{k}$ for $k > 4$.

\item
It is possible to obtain a $((1+\epsilon)t'(k))$-spanner that has $O(|P|)$ edges, from 
the coloring computed by the above algorithms.

\end{enumerate}
\end{theorem}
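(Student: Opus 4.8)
The plan is to assemble this theorem directly from the lemmas established earlier in Section~\ref{section-online}, since each of the four parts is essentially a bookkeeping step that combines results already in hand.

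For part~(1), I would invoke Lemma~\ref{prop-ubkon}, which states that Algorithm~\ref{alg-onk} produces a $k$-coloring satisfying the $(1+2\sin(\pi/k))$-ellipse property, together with the convention (justified by Propositions~\ref{helloworld} and~\ref{prop-komplete}) that the stretch factor of a coloring equals the stretch factor of the complete $k$-partite graph it induces. Specializing $1+2\sin(\pi/k)$ to $k=2,3,4$ gives $1+2\sin(\pi/2)=3$, $1+2\sin(\pi/3)=1+\sqrt{3}$, and $1+2\sin(\pi/4)=1+\sqrt{2}$, which are exactly the values claimed; for $k>4$ the bound is left in the form $1+2\sin(\pi/k)$.

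For part~(2), I would collect the lower-bound lemmas of this section. For $k=2$, any off-line lower bound is also an on-line lower bound, so Lemma~\ref{prop-lb2off} (the regular odd $n$-gon) already gives $3-\epsilon$. For $k=3$ and $k=4$, the two adversarial lemmas preceding this theorem force stretch factor at least $1+\sqrt{3}$ and $1+\sqrt{2}$, respectively, against \emph{every} on-line coloring algorithm. For $k>4$, the lemma using a regular $k$-gon together with the center of its circumcircle forces stretch factor at least $1/\cos(\pi/k)$, again against every on-line algorithm. Combining these lower bounds with part~(1) yields part~(3): for $k\in\{2,3,4\}$ the matching upper and lower bounds pin down $t'(k)$ exactly, while for $k>4$ we obtain the sandwich $1/\cos(\pi/k)\leq t'(k)\leq 1+2\sin(\pi/k)$.

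Finally, for part~(4) I would feed the coloring produced by Algorithm~\ref{alg-onk}, which satisfies the $t$-ellipse property for $t=1+2\sin(\pi/k)$, into Proposition~\ref{prop-gudm}; this yields a $k$-chromatic $((1+\epsilon)t)$-spanner on $O(|P|)$ edges, and for $k=2,3,4$ we have $t=t'(k)$. There is no genuine obstacle in this argument; the only points requiring care are verifying that the on-line adversary lemmas match the Algorithm~\ref{alg-onk} guarantee exactly for $k=2,3,4$ (so that the bounds in parts~(1) and~(2) coincide), and observing that for those small values the $k$-gon-plus-center bound $1/\cos(\pi/k)$ is weaker than, and hence superseded by, the tailored constructions.
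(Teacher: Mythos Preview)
Your proposal is correct and matches the paper's approach exactly: the theorem is a summary statement with no separate proof in the paper, and the preceding sentence (``To reduce this to a linear number of edges we apply Proposition~\ref{prop-gudm}\ldots'') together with the lemmas of Section~\ref{section-online} is precisely the assembly you describe. Your care in noting that the $k=2$ lower bound is inherited from the off-line case and that the specialized values of $1+2\sin(\pi/k)$ match the claimed constants for $k=2,3,4$ is exactly right.
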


\old{
\begin{theorem}
For any sequence $P$ of points in the plane, the $k$-coloring $c$ computed by
Algorithm~\ref{alg-onk} has the $t'(k)$-ellipse property, with 
$t'(2) \leq 3$, $t'(3) \leq 1+ \sqrt{3}$, $t'(4)  \leq 1+ \sqrt{2}$, and
$t'(k) \leq 1+ 2 \sin\frac{\pi}{k}$ for 
$k \geq 5$.
Moreover, there exist point sets such that no on-line coloring
algorithm can compute a $k$-coloring that has the $t$-ellipse property
for $t$ smaller than 
$t'(2) \geq 3$, $t'(3) \geq 1+ \sqrt{3}$, $t'(4)  \geq 1+ \sqrt{2}$, and
$t'(k) \geq 1/ \cos \frac{\pi}{k}$
 for $k \geq 5$. By Theorem~\ref{thm-wspd}, it is possible to
post-process the complete $k$-partite graph induced by the
coloring computed by Algorithm 5 in order to obtain a
$((1+\epsilon)t'(k))$-spanner that has $O(|P|)$ edges.
\end{theorem}
}

\section{Simulation Results}\label{section-simres-chromatic}

Using simulation, we now provide estimates of the average stretch factor of the colorings produced by Algorithm~\ref{alg-offk} and Algorithm~\ref{alg-onk}. Using a uniform distribution, we generated 200 sets of 50 points and 200 sets of 200 points. For each point set, we computed the stretch factor for $k$ ranging from 2 to 10. Figure~\ref{fig-simres-chromatic-50} and Figure~\ref{fig-simres-chromatic-200} show the results we obtained for the stretch factor. The 95\% confidence interval for these values is $\pm 0.0365$.

The general behavior of the average case performance ratio of these algorithms is not much different than what can be expected from the worst case analysis. In particular, the off-line algorithm performs significantly better than the on-line algorithm. Also, in both cases, as $k$ increases, the stretch factor reduction becomes less and less important. Another interesting observation is that for $k$ large enough ($k>6$ for 50 points and $k>3$ for 200 points), the average case stretch factor of the on-line algorithm is worse than the worst case stretch factor of the off-line algorithm.

For $k=2,3$ and $4$, in the off-line case, we used the algorithm for general values of $k$. It is interesting to notice that for $k=4$, the average stretch factor obtained using Algorithm~\ref{alg-offk} is greater than the worst case stretch factor obtained using Algorithm~\ref{alg-off4}. However, Algorithm~\ref{alg-off4} is less practical, since we have to compute a 4-coloring of a planar graph.

\begin{figure}
\centering
\includegraphics[scale=0.67]{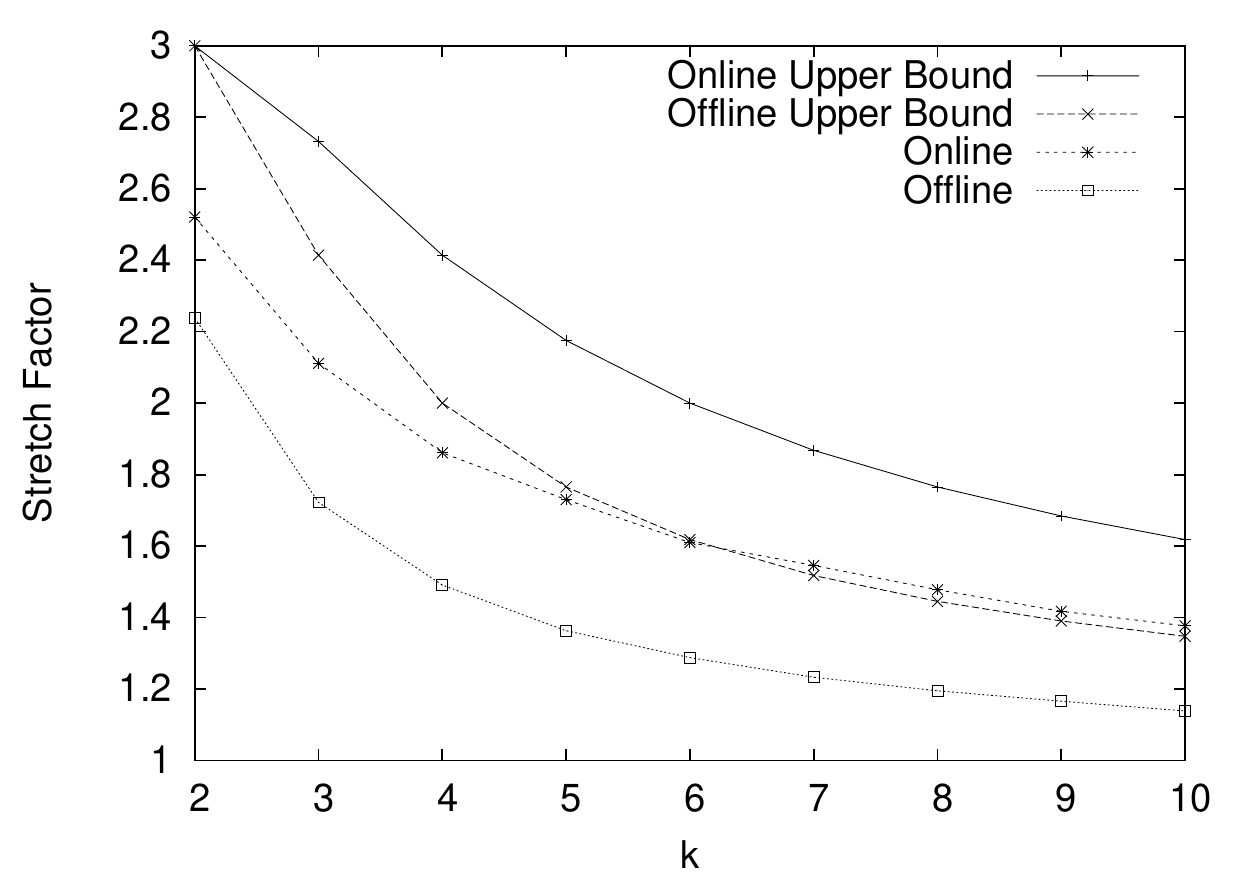}  
\begin{center}
\begin{tabular}{|l|ll|l|ll|}
\hline
$k$ & offline & online & $k$ & offline & online \\
\hline
2	 & 2.2383	 & 2.5208 & 7	 & 1.2329	 & 1.5456\\
3	 & 1.7219	 & 2.1111 & 8	 & 1.1947	 & 1.4778\\
4	 & 1.4907	 & 1.8608 & 9	 & 1.1658	 & 1.4175\\
5	 & 1.3631	 & 1.7300 & 10	 & 1.1384	 & 1.3765\\
6	 & 1.2877	 & 1.6098 & & & \\
\hline
\end{tabular}
\end{center}
%
\caption{Simulation results for 50 nodes using Algorithm~\ref{alg-offk} and Algorithm~\ref{alg-onk}.}\label{fig-simres-chromatic-50}
\end{figure}
\begin{figure}
\centering
\includegraphics[scale=0.67]{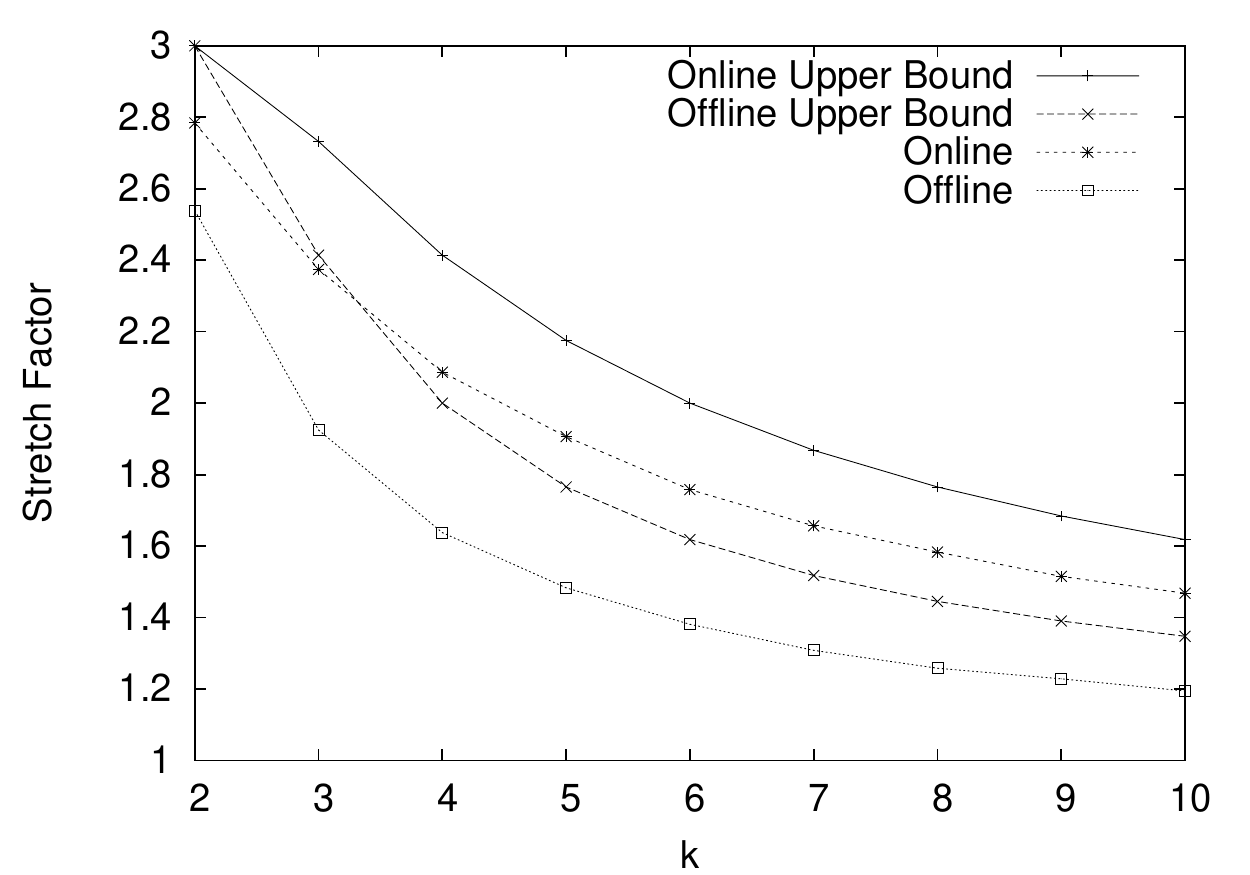} 
\begin{center}
\begin{tabular}{|l|ll|l|ll|}
\hline
$k$ & offline & online & $k$ & offline & online \\
\hline
2 & 2.5390 & 2.7844 & 7 & 1.3079 & 1.6563\\
3 & 1.9245 & 2.3743 & 8 & 1.2579 & 1.5833\\
4 & 1.6377 & 2.0866 & 9 & 1.2283 & 1.5149\\
5 & 1.4831 & 1.9062 & 10 & 1.1945 & 1.4677\\
6 & 1.3809 & 1.7579 & & &\\
\hline
\end{tabular}
\end{center}
%
\caption{Simulation results for 200 nodes using Algorithm~\ref{alg-offk} and Algorithm~\ref{alg-onk}.}\label{fig-simres-chromatic-200}
\end{figure}

\section{Conclusion}\label{section-conclusion}

In this paper, we investigated the problem of computing a spanner 
of a point set that has chromatic 
number $k$. To the best of our knowledge, this problem has not been 
considered before. For small values of $k$ ($k\leq 4$), we provided 
tight upper and lower bounds on the smallest possible stretch factor 
of such spanners. For larger values of $k$, we provided general 
upper and lower bounds which, unfortunately, are not tight. Our construction algorithms
show how to color a point set with $k$ colors such that the complete $k$-partite graph
induced by this coloring has the stated stretch factor. The number of edges in these
graphs can be reduced from quadratic to linear with a slight increase in the spanning ratio by applying
the general technique of Gudmundsson  \emph{et al.}~\cite{glns-adogg-02}. An interesting open problem in this setting of the problem is to find tight upper and lower bounds when $k>4$.

We also considered an on-line variant of this problem where the points
are presented sequentially and our algorithm 
assigns a color to each point upon reception such that the complete $k$-partite graph
induced by the coloring is a constant spanner. Again, for small values of $k$ ($k\leq 4$), we provided 
tight upper and lower bounds on the smallest possible stretch factor 
of such spanners and for $k>4$, we provided general 
upper and lower bounds that are not tight.
A linear-sized spanner can be constructed after all the points have been colored by applying the technique of Gudmundsson  \emph{et al.}~\cite{glns-adogg-02}. However, in this case, our algorithm for computing the linear-sized constant spanner is {\em not} on-line.
Therefore, there are two open problems in the on-line setting. First, to close the gap between the upper and lower bound for $k>4$. Second, provide an on-line algorithm
that computes the linear-sized constant spanner.

\bibliographystyle{plain}
\bibliography{bib}

\end{document}